\newtheorem{prop}{Proposition}
\newtheorem{lemma}{Lemma}
\newtheorem{assumption}{Assumption}
\newtheorem{remark}{Remark}
\title{Unsupervised Detection of Sub-Territories of the Subthalamic Nucleus During DBS Surgery with Manifold Learning
}
\author{
   Ido Cohen* \\
  The Andrew and Erna Viterby Faculty of Electrical and Computer Engineering \\
  Technion \\
  Haifa, Israel\\
  \texttt{sidoc@campus.technion.ac.il} \\
   \AND
  Dan Valsky* \\
  The Edmond and Lily Safra Center for Brain Research (ELSC) \\
  The Hebrew University \\
  Jerusalem, Israel\\
  \texttt{dan@cfin.au.dk} \\
   \And
  Ronen Talmon \\
  The Andrew and Erna Viterby Faculty of Electrical and Computer Engineering \\
  Technion \\
  Haifa, Israel\\
  \texttt{ronen@ef.technion.ac.il} \\
}
\begin{document}
\maketitle

\begin{abstract}
During Deep Brain Stimulation (DBS) surgery for treating Parkinson's disease, one important task is to detect a specific brain area called the Subthalamic Nucleus (STN) and a sub-territory within the STN called the Dorsolateral Oscillatory Region (DLOR).
Accurate detection of the STN borders is crucial for adequate clinical outcomes. 
Currently, the detection is based on human experts, often guided by supervised machine learning detection algorithms. Consequently, this procedure depends on the knowledge and experience of particular experts and on the amount and quality of the labeled data used for training the machine learning algorithms. 
In this paper, in order to circumvent such dependence and the inevitable bias introduced by training data, we present a data-driven unsupervised algorithm for detecting the STN and the DLOR during DBS surgery.
Our algorithm is based on an agnostic modeling approach for general target detection tasks.
Given a set of measurements, we extract features and propose a variant of the Mahalanobis distance between these features. We show theoretically that this distance enhances the differences between measurements with different intrinsic characteristics.
Then, we incorporate the new features and distances into a specific manifold learning method, called Diffusion Maps. We show that this particular Diffusion Maps method gives rise to a representation that is consistent with the underlying factors that govern the measurements. 
Since the construction of this representation is carried out without rigid modeling assumptions, solely from the measurements, it can facilitate a wide range of detection tasks;
here, we propose a specification for STN and DLOR detection during DBS surgery. 
We present detection results on 25 sets of measurements recorded from 16 patients during surgery.
Compared to a competing supervised algorithm based on the Hidden Markov Model (HMM), our unsupervised method demonstrates similar results in the STN detection task and superior results in the DLOR detection task.
\end{abstract}

\keywords{diffusion maps \and manifold learning \and Mahalanobis distance \and subthalamic nucleus \and
deep brain stimulation \and Parkinson’s disease; }

\section{Introduction} \label{section:introduction}
Deep Brain Stimulation (DBS) is a treatment involving an implanted stimulating device that sends electrical signals to brain areas that are responsible for body movements. Once the device is implanted in an appropriate position, DBS can help to reduce the symptoms of tremor, slowness, stiffness, and walking problems caused by several neuronal diseases, such as Parkinson’s disease, dystonia, or essential tremor. 
More specifically, we focus on the DBS of the Subthalamic Nucleus (STN), which is a known and effective treatment for Parkinson’s disease \cite{limousin1998electrical,benabid1994acute}.
During the surgical procedure to implant the DBS lead, one important task is to detect the exact location of the STN borders and a sub-territory within it, called Dorsolateral Oscillatory Region (DLOR). In \cite{hariz2002complications,nickl2019rescuing,moro2002impact,witt2012factors}, it was shown that accurate detection of these regions contributes substantially to the clinical beneﬁt of STN-DBS. 

The common procedure to detect the STN borders and the DLOR consists of two steps.
First, before the surgery, a coarse approximation of the STN location is obtained based on Magnetic Resonance Imaging (MRI) and computed tomography (CT) images.
This coarse approximation facilitates the determination of a pre-planned trajectory to the target region.
Second, during the surgery, the exact detection of the target region is based on Micro Electrode Recordings (MERs) of neuronal activity along the pre-planned trajectory.
The MERs are typically intricate and one needs to extract the relevant information for the detection of the target regions. 
In general, most target detection algorithms use prior knowledge on the data, such as predefined models, hand-crafted features, and human expert labels, in order to identify specific patterns that are indicative of the target region. 
For example, existing methods rely on various features extracted from the MERs, including the total power of the signal \cite{moran2006real}, the oscillatory activity in the beta (13–30 Hz) frequency band \cite{weinberger2006beta,zaidel2010subthalamic,shamir2012microelectrode}, and the high-frequency ($>500$ Hz) neuronal ``noise'' \cite{novak2007detection,telkes2016prediction}. Then, based on such features, as well as on human expert labels, supervised classifiers are applied \cite{valsky2017s,wong2009functional,zaidel2009delimiting}.

Current detection methods suffer from the following inherent shortcomings. First, the accuracy of supervised classifiers greatly depends on the amount of labeled data, which is typically small in STN detection applications. Second, the hand-crafted features that are used in the detection do not necessarily carry sufficient information on the STN's exact location. Third, the labels are tagged by human experts, and therefore, naturally, are biased towards their specific experience and knowledge. 

In this work, to alleviate such shortcomings we develop a data-driven unsupervised method and apply it to the STN and DLOR detection tasks performed during DBS surgery.
Our method is based on an agnostic modeling approach that assumes that the measurements typically include many sources of variability, where only a few of them are informative and facilitate the detection of specific target regions of interest. 
We assume that the measurements are the output of some unknown measurement function of hidden variables, which represent the sources of variability. We further assume that the hidden variables can be divided into two classes. The first class consists of the intrinsic variables, which are characterized by slow dynamics. The second class consists of interference and noise variables, which are manifested by high measurement variances.
Our method consists of three parts. In the first part, we present useful features that can be computed from the measurements. In the second part, we design a distance function between the features that captures the difference between the (slow) intrinsic variables and is invariant to the (fast) noise variables. Finally, in the third part, by making use of Diffusion Maps \cite{coifman2006diffusion}, we construct a global representation of the measurements and show that it is consistent with the intrinsic variables. Our premise, which we support by empirical evidence, is that the discovery of the intrinsic variables is useful for modeling, in general, and for detecting regions of interest, in particular.

The remainder of this paper is organized as follows.
In Section \ref{section:mehtod}, we present the proposed method in a general context, and show both theoretically and in experiments that the proposed method is able to reveal the intrinsic representation of the measurements.
In Section \ref{section:DBS}, we propose a specification of the method for the problem of STN and DLOR border detection, yielding two purely unsupervised algorithms. We present the detection results obtained by our algorithm and compare its performance to existing supervised Hidden Markov Model (HMM) based algorithms \cite{valsky2017s}.
Finally, in Section \ref{section:discussion}, we discuss the results and outline a few potential directions for future research.

\section{Method -- Unsupervised State Variables Approximation} \label{section:mehtod}
Our method is based on a setting consisting of measurements of a stochastic dynamical system, obtained through some unknown observation function. The propagation model of the dynamical system is unknown, and the main assumption is that the system is driven by a set of intrinsic variables. In this section, we present an agnostic algorithm that builds a new representation of these intrinsic variables from the system measurements. The new embedding of the system intrinsic variables facilitates accurate target detection of different system regimes.  
We start with a description of the general setup, and then we present our method and demonstrate it using simulations and real measurements from a simple mechanical system. Finally, we show a theoretical justification for our derivations.
In Section \ref{section:DBS}, we show a utilization of the algorithm for the STN and DLOR detection tasks.

\subsection{Problem Formulation}\label{subsecion:prob_formulate}

Consider a system with $N$ different states, and let $\boldsymbol{y}_i(t) \in \mathbb{R}^s$ denote the measurements of the system at state $i$, where $i=1,\ldots,N$ denotes the index of the state and $t$ represents time. 
Suppose that the measurements have two sources of variability.
The first source is governed by latent \emph{state variables} $\boldsymbol{\theta}_{i}(t) \in \mathbb{R}^{d_{1}}$. 
These variables characterize the system state, and their variation in time is a small perturbation
of some baseline value ${\bar{\boldsymbol{\theta}}_{i}} \in \mathbb{R}^{d_{1}}$. 
More specifically, we assume that the evolution in time of the state variables can be described by the following It\^{o} process:
\begin{align}
d\boldsymbol{\theta}_{i}(t) & = -\nabla {U}_{\bar{{\theta}}_i}(\boldsymbol{\theta}_{i}(t))dt+ {I}_{d_1 \times d_1} d \boldsymbol{w}_{i,\theta}(t),
\label{eq:ito_theta}
\end{align}
where the process drift is the gradient of the quadratic potential function ${U}_{\bar{{\theta}_i}}(\boldsymbol{\theta})=\frac{1}{2}(\boldsymbol{\theta} -\bar{\boldsymbol{\theta}_{i}})^\top(\boldsymbol{\theta} -\bar{\boldsymbol{\theta}}_{i})$ centered at the baseline value $\bar{\boldsymbol{\theta}}_{i}$, $\boldsymbol{w}_{i,\theta}(t)$ is a vector of $d_{1}$ independent Brownian motions, ${I}_{d_1 \times d_1}$ is a $d_{1} \times d_{1}$ identity matrix, and $(\cdot)^\top$ represents vector or matrix transpose.

The second source of variability is considered to be noise, represented by latent variables $\boldsymbol{\eta}_{i}(t) \in \mathbb{R}^{d_{2}}$.
Suppose that the \emph{noise variables} are characterized by high variability in time compared to the variability of the state variables. Formally, the evolution in time of the noise variables can be described by the following It\^{o} process:
\begin{align}
d\boldsymbol{\eta}_{i}(t) & = -\nabla {U}{\bar{\eta}_i}(\boldsymbol{\eta_{i}}(t))dt+\frac{1}{\epsilon}{I}_{d_2 \times d_2}d \boldsymbol{w}_{i,\eta}(t),
\label{eq:ito_eta}
\end{align}
where the drift is the gradient of a quadratic potential function given by ${U}_{\bar{\eta}_i}(\boldsymbol{\eta})=\frac{1}{2}(\boldsymbol{\eta} -\bar{\boldsymbol{\eta}}_i)^\top(\boldsymbol{\eta} -\bar{\boldsymbol{\eta}}_i)$, $\bar{\boldsymbol{\eta}}_i$ is an unknown baseline constant, ${I}_{d_2 \times d_2}$ is a $d_{2} \times d_{2}$ identity matrix, $\boldsymbol{w}_{i,\eta}(t)$ is a vector of $d_{2}$ independent Brownian motions, and $0<\epsilon \ll 1$. Note that the variance of the diffusion term of the noise variables in \eqref{eq:ito_eta} is larger than the variance of the diffusion term of the state variables in \eqref{eq:ito_theta} by a factor of $1/\epsilon^2$.
We assume that $\boldsymbol{w}_{i,\theta}(t)$ and $\boldsymbol{w}_{i,\eta}(t)$ are independent. 
We remark that \eqref{eq:ito_theta} and \eqref{eq:ito_eta} establish a prototypical propagation model of multi-scale stochastic dynamical systems \cite{mattingly2010convergence,meyn2012markov,dsilva2016data}.

Suppose that the measurements are given by $\boldsymbol{y}_{i}(t)=f(\boldsymbol{\theta}_{i}(t),\boldsymbol{\eta}_{i}(t))$, where $f:\mathbb{R}^d \rightarrow \mathbb{R}^s$ is some smooth bi-Lipschitz, possibly nonlinear, function and $d=d_1+d_2$.

For notational convenience, we denote 
\[
\boldsymbol{x}_{i}(t)=\begin{bmatrix}
     \boldsymbol{\theta}_{i}(t)  \\
     \boldsymbol{\eta}_{i}(t)
\end{bmatrix},
\]
and accordingly, we recast \eqref{eq:ito_theta} and \eqref{eq:ito_eta} as the following It\^{o} process in $d$ dimensions: 
\begin{align}
d\boldsymbol{x}_{i}(t) & = -\nabla {U(x}_{i}(t))dt+{\Lambda} d \boldsymbol{w}_{i}(t) ,
\label{eq:x_eq}
\end{align}
where ${U}(\boldsymbol{x})$ is a quadratic potential function centered at 
\[
\bar{\boldsymbol{x}}_{i}= \begin{bmatrix} \bar{\boldsymbol{\theta}}_{i} \\ \bar{\boldsymbol{\eta}}_i \end{bmatrix},
\]
and 
\[
{\Lambda}= \begin{bmatrix} {I}_{d_1 \times d_1} & 0 \\ 0 & \frac{1}{\epsilon}{I}_{d_2 \times d_2} \end{bmatrix}, \ \, \boldsymbol{w}_{i}(t)= \begin{bmatrix} \boldsymbol{w}_{i,\theta}(t) \\ \boldsymbol{w}_{i,\eta}(t) \end{bmatrix}.
\]

We assume that we have access to $M$ measurements from each state sampled in time on a discrete uniform grid. Accordingly, let $\boldsymbol{y}_{i}(t_{j}) \in \mathbb{R}^s$ denote the $j$th time measurement of the system at state $i$, 
where $t_{j}=\delta t\cdot j$, $j=\{0,1,2,....,M-1\}$, and $\delta t$ is the sampling time interval.
Our goal is to decouple the two sources of variability, given the measurements $\boldsymbol{y}_{i}(t_j)$ without prior knowledge on the system variables, and to build an embedding of the system measurements that is consistent with the state variables.
Since the state variables can be viewed as a proxy of the true state of the system, the ability to extract them may facilitate the identification of particular desired system regimes and target states.
In the sequel, we will show how such an embedding sets the stage for accurate, unbiased STN and DLOR detection.

The specification of this problem formulation in the context of STN detection during DBS surgery is as follows.
We measure from $N$ depths along the pre-planned trajectory and from each depth we acquire $M$ measurements, denoted by $\boldsymbol{y}_{i}(t_j)$, where $i$ is now the index of a specific depth.
We assume that the measurements are driven by two sources of variability.
The first source of variability is represented by the state variables $\boldsymbol{\theta}_{i}(t_j)$, which are some unknown hidden variables that characterize the STN region.
The second source of variability is represented by the noise variables $\boldsymbol{\eta}_{i}(t_j)$.
We do not have direct access to the state variables depending on the region, nor to the noise variables, and we measure them through some unknown possibly nonlinear function $f$ of \[
\boldsymbol{x}_{i}(t_j)= \begin{bmatrix}
\boldsymbol{\theta}_{i}(t_j) \\
\boldsymbol{\eta}_{i}(t_j)
\end{bmatrix}.
\]
We aim to build an embedding of the measurements $\boldsymbol{y}_{i}(t_j)$, which represent the system state, and thereby, to identify in a purely unsupervised manner the STN region.

In order to accomplish this goal, we devise a pairwise distance between system states that satisfies:
\begin{align}\label{eq:goal_dist}
	d(\boldsymbol{z}_{i},\boldsymbol{z}_{l})\thickapprox \alpha ||\bar{\boldsymbol{\theta}}_{i}-\bar{\boldsymbol{\theta}}_{l}||^{2},
\end{align}
where $\boldsymbol{z}_{i}$ is some representation of the measured data $\{\boldsymbol{y}_{i}(t_j)\}_{j=1}^{M}$ at the $i$th state and $\alpha$ is some constant.

\subsection{Proposed Algorithm}

We propose an unsupervised algorithm that is able to reveal the relations between the system's intrinsic variables without any prior knowledge. In this sub-section we focus on the utilization of our method relying on the analysis presented in Section \ref{subsection:theoretical_analysis}. 
Broadly, the proposed algorithm consists of two main stages. In the first stage, we represent each measurement by a set of features that can be computed solely from measurements and devise a distance function that achieves \eqref{eq:goal_dist}. In the second stage, we apply a manifold learning method, Diffusion Maps, that constructs a global representation of the hidden state variables based on the proposed features and distance function. 

\subsubsection{Features and Distance Function}
First, for each set of measurements $\{\boldsymbol{y}_i(t_j)\}_{j=1}^M$, we define features that can be computed solely from the measurements: 
\begin{align}
    \hat{\boldsymbol{z}}_i &=\frac{1}{M}\sum_{j=1}^{M} \boldsymbol{y}_i(t_j)\label{eq:z_i}\\ 
    \hat{{C}}_i &=\frac{1}{M-1}\sum_{j=2}^{M}[\boldsymbol{\mu}_i(t_j)-\hat{\boldsymbol{\mu}}_i)][\boldsymbol{\mu}_i(t_j)-\hat{\boldsymbol{\mu}}_i)]^{\top}\label{eq:c_i}
\end{align}
where we denote the increments between consecutive measurements by $\boldsymbol{\mu}_i(t_j)=\boldsymbol{y}_i(t_j)-\boldsymbol{y}_i(t_{j-1})$ and $\hat{\boldsymbol{\mu}}_i=\frac{1}{M-1}\sum_{j=2}^{M}\boldsymbol{\mu}_i(t_j)$, so that $\hat{\boldsymbol{z}_i}$ is the empirical mean of the measurements and $\hat{{C}_i}$ is the empirical covariance of the measurement increments.

Second, we define a metric between these features, enabling us to reveal the state variables.
Particularly, we propose to use the following modified version of the (squared) Mahalanobis distance\cite{Coifman_Singer:2008,kushnir2012anisotropic}:

\begin{align}
d(\hat{\boldsymbol{z}_{i}},\hat{\boldsymbol{z}_{l}})=\frac{1}{2}(\hat{\boldsymbol{z}}_{i}-\hat{\boldsymbol{z}}_{l})^{\top}(\hat{{C}}_{i}^{-1}+\hat{{C}}_{l}^{-1})(\hat{\boldsymbol{z}}_{i}-\hat{\boldsymbol{z}}_{l}).
\label{eq:mahalanobis}
\end{align}
One of the main assumptions underlying our work is that the latent state variables $\boldsymbol{\theta}_i(t)$ are characterized by small perturbations around some informative baseline value, whereas the noise variables $\boldsymbol{\eta}_i(t)$ are characterized by high variance. In previous work, e.g., in \cite{Coifman_Singer:2008} and \cite{dsilva2016data}, it was shown that this variant of the Mahalanobis distance implicitly attenuates hidden components with high variance without prior knowledge, motivating its utilization in our setting as well. In Section \ref{subsection:theoretical_analysis}, we show theoretically that the proposed distance attenuates the influence of the noise variables and gives rise to a distance between the hidden state variables. In Section \ref{subsection:toy} and Section \ref{section:DBS}, we present empirical results that further support the usage of this distance.

\subsubsection{Diffusion Maps}

Manifold learning is a class of nonlinear geometry-oriented dimensionality reduction methods \cite{Tenenbaum2000,Roweis2000,Donoho2003,Belkin_Niyogi_2003}.
For the purpose of finding a global parametrization that embodies the relation between the system variables, we use a kernel-based manifold learning technique called Diffusion Maps \cite{Coifman_Lafon2006,coifman2006diffusion}. 
Typically in manifold learning, a high dimensional data set that is assumed to lie on a low dimensional manifold is given. This class of methods attempts to reveal the intrinsic structure of the data set (the low dimensional manifold) by preserving distances within local neighborhoods. Manifold learning methods have been successfully applied to a broad range of applications, e.g., the discovery of the latent variables of dynamical systems \cite{talmon2015manifold,yair2017reconstruction}, earth structure classification \cite{kushnir2012anisotropic}, image reconstruction \cite{zhu2018image}, signal denoising \cite{singer2009diffusion}, numerical simulation enhancement \cite{ibanez2018manifold}, fetal electrocardiogram analysis \cite{shnitzer2019recovering}, sleep stage identification \cite{wu2014assess}, and time series filtering \cite{talmon2013empirical}, to name but a few.
%

In the sequel, we will briefly review the method in the context of our work.

Suppose that we have the features of $N$ system states, i.e., ($\hat{\boldsymbol{z}_i}, \hat{{C}_i})$ for $i=1,\dots,N$, computed from the measurements.
We denote by ${W}$ the $N \times N$ pairwise affinity matrix between the features, whose $(i,l)$th element is given by:
\begin{align}\label{eq:W_gloabal}
{W}_{i,l}=\exp\left\{-{\frac{d(\hat{\boldsymbol{z}}_i,\hat{\boldsymbol{z}}_l)}{\epsilon}}\right\},
\end{align}
where the (squared) distance is defined in \eqref{eq:mahalanobis}, and $\epsilon>0$ is the kernel scale, usually set as the median of the pairwise distances.
We define a corresponding diffusion matrix ${K}$ by: 
\begin{equation}\label{eq:K}
{K}_{i,l}=\frac{{W}_{i,l}}{\boldsymbol{w}(i)}
\end{equation}
where
\begin{equation}\label{eq:w_of_K}
  \boldsymbol{w}(i)=\sum_{l=1}^{N}{W}_{i,l}
\end{equation}
We remark that several different normalizations of the affinity matrix $W$ were proposed in \cite{Coifman_Lafon2006} and in related literature. In our work, we tested different normalizations and constructed $K$ as presented since it yielded the best empirical results.

Based on the spectral decomposition of ${K}$, we build a global representation of the system states as follows.
Let $\lambda^0,...,\lambda^{N-1}$ and $\boldsymbol{\psi}^0,...,\boldsymbol{\psi}^{N-1}$ be the eigenvalues and eigenvectors of ${K}$, respectively, written in descending order, so that $\lambda_{N-1} \leq ...\leq \lambda_0=1 $.
Using the $P$ eigenvectors corresponding to the largest $P$ eigenvalues, we define the following (nonlinear) map for each state $i$ to a $P$-dimensional space:
\begin{align*}
i \mapsto \boldsymbol{\Psi}_i = (\boldsymbol{\psi}^1(i),\boldsymbol{\psi}^2(i),...,\boldsymbol{\psi}^P(i)) \in \mathbb{R}^P.
\end{align*}
Since this embedding of the data is based on an affinity that is locally invariant to the noise variables, i.e., the corresponding distance satisfies \eqref{eq:goal_dist} (see Section \ref{subsection:theoretical_analysis}), we view it as a new representation of the hidden state variables.
We conclude this section with the presentation of the proposed algorithm in Algorithm \ref{alg:proposed}.

\begin{algorithm}[t]
	\caption{The Proposed Algorithm}
	\label{alg:proposed}
	\textbf{Input}: $M$ measurements of $N$ different states, i.e., ${\boldsymbol{y}_{i}(t_{j})}_{j=1}^M \in \mathbb{R}^s, \ i={1,\ldots,N}$. \\
	\textbf{Output}: A low dimensional representation of each state $\boldsymbol{\Psi}_i \in \mathbb{R}^P$.
	
	\begin{enumerate}
		\item For each state $i$, compute the feature $\hat{\boldsymbol{z}}_i$ and the covariance matrix $\hat{{C}_i}$ according to \eqref{eq:z_i} and \eqref{eq:c_i}.
		\item Build the pairwise affinity matrix ${W}$ between all states according to \eqref{eq:W_gloabal} and \eqref{eq:mahalanobis}.
		\item  Compute the diffusion operator ${K}$ according to \eqref{eq:K} and \eqref{eq:w_of_K}
		\item Calculate the spectral decomposition of ${K}$ and obtain its eigenvalues $\left\{\lambda^l\right\}_{l=0}^{N-1}$ and right eigenvectors $\left\{\boldsymbol{\psi}^l\right\}_{l=0}^{N-1}$.
		\item Build a nonlinear mapping (embedding) of the system state:
		\begin{align*}\label{}
            (\hat{\boldsymbol{z}_i},\hat{{C}_i})\mapsto \boldsymbol{\Psi}_i=(\boldsymbol{\psi}^1(i),\boldsymbol{\psi}^2(i),...,\boldsymbol{\psi}^P(i))
		\end{align*}

	\end{enumerate}
\end{algorithm}

\subsection{Simulation Results} \label{subsecion:simulation}

\begin{figure*}[!t]
\centerline{\includegraphics[width=181mm,height=90mm]{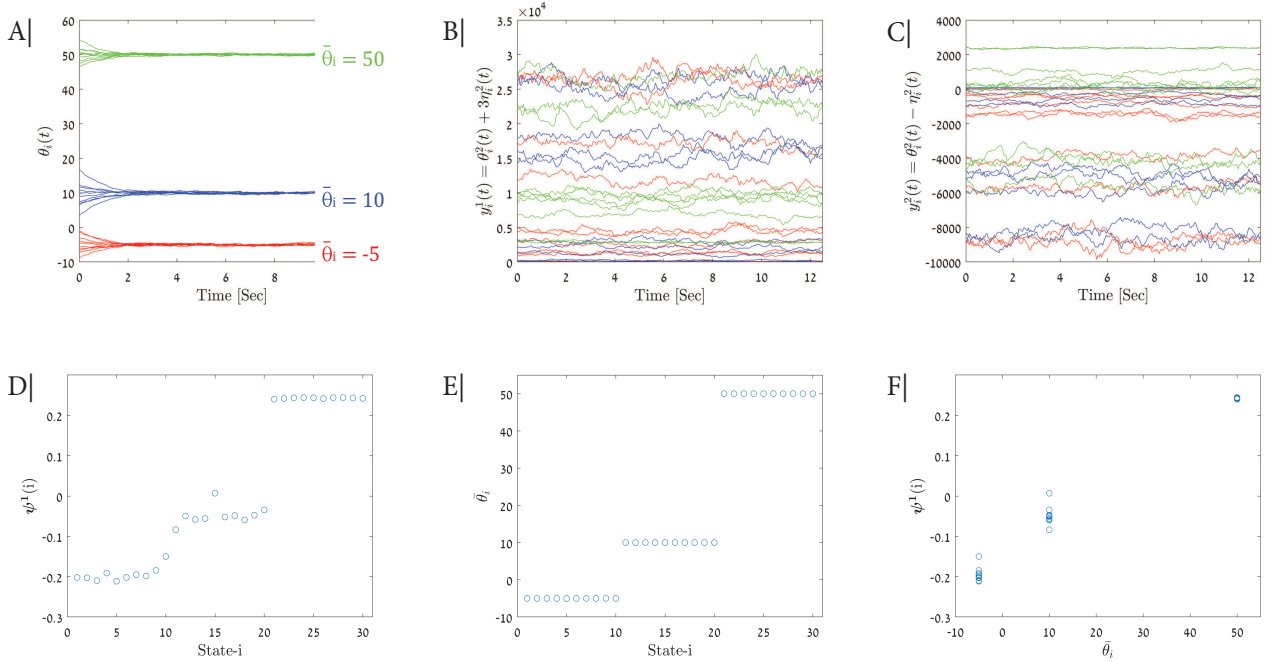}}
\caption{Illustration of our method on  simulations. (A) The system's hidden state variable $\theta_i(t)$ colored according to the baseline value $\bar{\theta}_i$.
(B)-(C) The system's measurements $ \boldsymbol{y}_i(t)=f(\theta_i(t),\eta_i(t)) \in \mathbb{R}^2$ colored according to their (hidden) baseline values $\bar{\theta}_i$. These measurements serve as the input data to our algorithm.
(D) The one dimensional embedding $\boldsymbol{\psi}^1(i)$ obtained by our suggested algorithm as a function of the state index.
(E) The true baseline values $\bar{\theta}_i$ (``the ground truth'') as a function of the state index. 
(F) A scatter plot that present the relation between the embedding obtained by our suggested algorithm and the ``ground truth''.}
\label{fig:Simulation}
\end{figure*}

To illustrate the proposed algorithm, we consider the following evolution of the state variable:
\begin{align*}
\theta_{i}(t_{j+1}) - \theta_{i}(t_{j}) & = -(\theta_{i}(t_{j})-\bar{\theta}_i)\Delta t+ \sqrt{\Delta t} w_{i,\theta},
\end{align*}
where $w_{i,\theta} \sim N(0,0.09)$, $\Delta t = 0.05$, $1 \leq j \leq 250$ and the baseline values are given by
\begin{align*}
\bar{\theta}_i =
  \begin{cases}
    -5   & \quad \text{for } 1\leq i\leq 10\\
    10  & \quad \text{for } 11\leq i\leq 20\\
    50  & \quad \text{for } 21\leq i\leq 30.
  \end{cases}
\end{align*}
A realization of all the trajectories of the state variable $\theta_i(t_j)$  is shown in Figure \ref{fig:Simulation}(A). 

In addition, we consider the following evolution of the noise variable:
\begin{align*}
\eta_{i}(t_{j+1}) - \eta_{i}(t_{j}) & = -(\eta_{i}(t_{j})-\bar{\eta}_i)\Delta t+ \frac{1}{\epsilon} \sqrt{\Delta t} w_{i,\eta},
\end{align*}
where $w_{i,\eta} \sim N(0,0.09)$, $\epsilon=0.1$, and the baseline values of the noise are uniformly sampled from $\bar{\eta}_i \in \{0,\ldots,100\}$.

Suppose that the hidden variables $(\theta_i(t_j),\eta_i(t_j))$ are observed through the following non linear function $f:\mathbb{R}^2 \rightarrow \mathbb{R}^2$:
\begin{align*}
\boldsymbol{y}_i(t_j)&=f(\theta_i(t_j),\eta_i(t_j))\\
&=(\theta_i^2(t_j)+3\eta_i^2(t_j),\theta_i^2(t_j)-\eta_i^2(t_j)).
\end{align*}
We note that this nonlinear observation function $f$ satisfies Assumption \ref{assumption:1} that is presented in Section \ref{subsection:theoretical_analysis}. 
In Figure \ref{fig:Simulation}(B)-(C), we plot the two coordinates of the measurements  $\boldsymbol{y}_i(t_j)$ in $\mathbb{R}^2$ for $i=\{1,\ldots,30\} $.
For each state $i$, we computed the features $\boldsymbol{z}_i$ and $\boldsymbol{C}_i$ based on the measurements $\boldsymbol{y}_i(t_j)$ and applied Algorithm 1.
In Figure \ref{fig:Simulation}(D)-(F), we display a comparison between the output of Algorithm 1 for $P=1$, namely, $\boldsymbol{\psi}^1(i)$, and the true (inaccessible) baseline value $\bar{\theta}_i$. 

We can observe that the computed one dimensional embedding has high correspondence with the hidden intrinsic baseline state $\bar{\theta}_i$, and can even be approximated by a linear function: $\boldsymbol{\phi}^1_i \thickapprox \alpha \bar{\theta}_i , \quad  \alpha =0.005$, thereby achieving our main goal. 

We remark that, in this specific example, our empirical results suggest that the intrinsic state of the system can be captured solely by the first leading eigenvector. However, in general, the information on the intrinsic state is often manifested in the first few leading eigenvectors. Therefore, subsequent high-dimensional clustering, such as k-means, is typically applied to the first leading eigenvectors. Importantly, such a generic clustering stage does not consider the temporal order of the samples, which is exploited in our algorithms presented in Section \ref{section:DBS}.

\subsection{Experimental Results on a Mechanical System}\label{subsection:toy} 

\begin{figure*}[!t]
\centerline{\includegraphics[width=181mm,height=90mm]{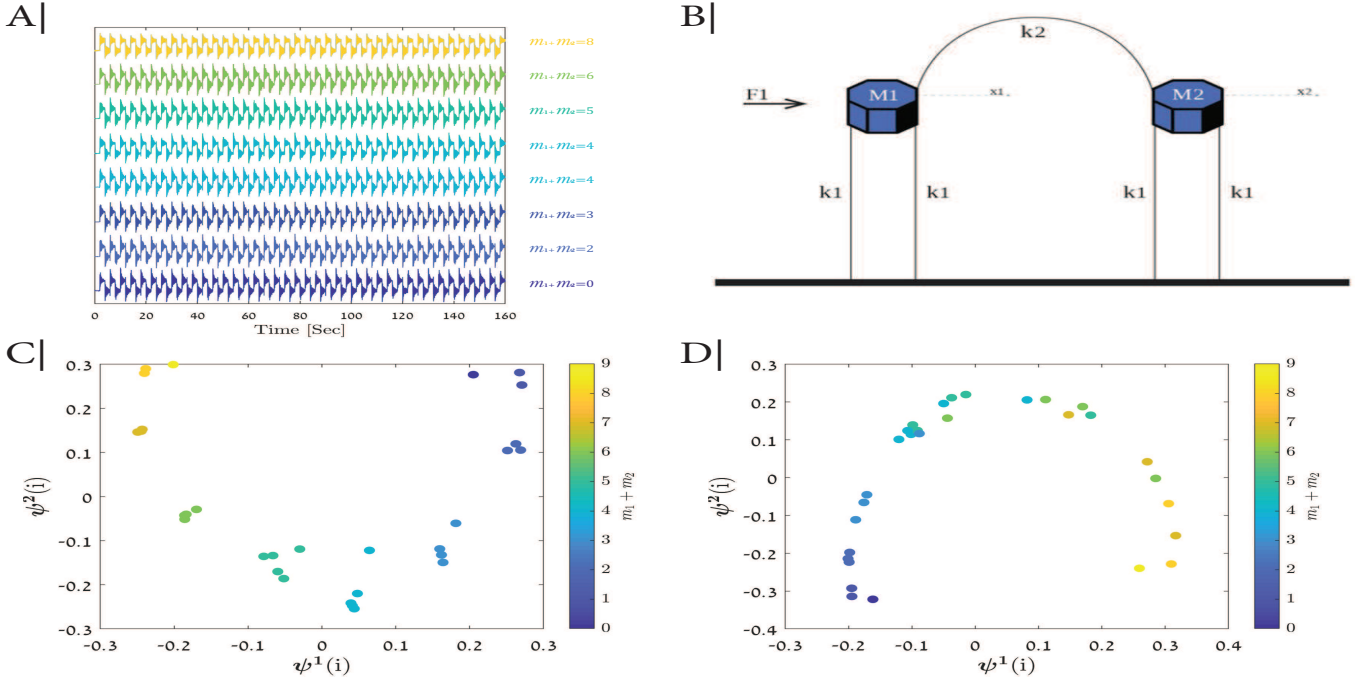}}
\caption{Illustration of our method on a mechanical system. (A) An example of the input data to the algorithm: measurements of the location of mass $m_2$ over time.
(B) A diagram of the mechanical system consisting of two coupled masses.
(C) The two-dimensional embedding obtained by Algorithm \ref{alg:proposed} (our method) colored by the sum of the masses.
(D) The two-dimensional embedding obtained by a modified Algorithm \ref{alg:proposed} colored by the sum of the masses, where the $\ell_2$ norm is used instead of the Mahalanobis distance (see text for details).}
\label{fig:Toy_Problem}
\end{figure*}

To further demonstrate the proposed method, we apply Algorithm \ref{alg:proposed} to real  measurements of a mechanical system. 
On the one hand, we show here the recovery of the main properties of the system from its observations in a data-driven manner without prior knowledge on the system. On the other hand, this particular mechanical system was chosen since it has a known definitive characterization, which can serve as a ground truth in our experiment in order to assess and validate the empirical results.

The mechanical system we consider consists of two masses, $m_1$ and $m_2$, that are coupled with a spring with constant $k_2$. Each mass is connected to the ground with two additional springs, each with constant $k_1$. 
Let $x_1$ and $x_2$ denote the positions of the two masses. An external force, denoted by $F1$, is applied to the mass $m_1$. A diagram of the mechanical system is presented in Figure \ref{fig:Toy_Problem}(B).

The experiment comprised repeated trials.
In each trial, the values of the two masses were set from a predefined grid consisting of $30$ points, where $m_1 \in \{0,\ldots,4\}$ and $m_2 \in \{0,\ldots,5\}$. 
An external force (a square wave function) was used to invoke the system using a voice-coil actuator.
With an optic-laser sensor, we measured the position of the mass $m_2$ over time.
The duration of each trial was $100$ seconds and the sampling rate was $10$ kHz. Let $g_i(t_j)$ denote the time-series of the measured signal at the $i$th trial for $t_j=1,\ldots,1,000,000$. 

In analogy to our setting, we have observations of a mechanical system from $30$ different states, where each state $i$ is specified by the values of the two masses $m_1$ and $m_2$.

Figure \ref{fig:Toy_Problem}(A) shows an example of the system measurements from different states, colored by the sum of the masses.

We follow common-practice in manifold learning and apply a pre-processing stage to the one dimensional time-series of observations.
Specifically, we computed the spectrogram of each time series using an analysis window of length $1000$ with an overlap of $500$.
Let $\boldsymbol{y}_i(t_j) \in \mathbb{R}^s$ denote the resulting spectrogram at time $t_j=1\ldots, M$ in state $i=1\ldots,N$.

\subsubsection{System Analysis}
Using Newton's law, the ODE that describes the movement of each mass is given by:
\begin{align*}
m_1 \ddot{x}_1    &= F(t) -2k_1x_1 -k_2(x_1-x_2) -c_1\dot{x}_1  \\
m_2 \ddot{x}_2    &=  -2k_1x_2 -k_2(x_1-x_2) -c_2\dot{x}_2.
\end{align*}
We omit the dumping factor of each mass, namely, $c_1$ and $c_2$, and recast the ODEs in a matrix form:
\begin{align*}
\begin{bmatrix} m_1 & 0 \\ 0 & m_2 \end{bmatrix}\begin{bmatrix} \ddot{x_1} \\ \ddot{x_2} \end{bmatrix}+\begin{bmatrix} 2k_1+k_2 & k_2 \\ k_2 & 2k_1+k_2 \end{bmatrix}\begin{bmatrix} x_1 \\ x_2 \end{bmatrix}=\begin{bmatrix} F(t) \\ 0 \end{bmatrix}
\end{align*}
The modes of the system can be found by solving an eigenvalue problem of the matrix ${K}^{-1}{M}$, where 
\begin{align*}
	{M} =\begin{bmatrix} m_1 & 0 \\ 0 & m_2 \end{bmatrix}, \, {K} = \begin{bmatrix} 2k_1+k_2 & k_2 \\ k_2 & 2k_1+k_2 \end{bmatrix}
\end{align*}
The corresponding characteristic polynomial is:
\begin{align*}
&\text{det}({K}^{-1}{M} -\lambda {I} )\\&=\text{det}\begin{bmatrix} (2k_1+k_2) \cdot m_1-\lambda & -k_2\cdot m_1 \\ -k_2\cdot m_2 & (2k_1+k_2)\cdot m_2-\lambda \end{bmatrix}\\
&=[(2k_1+k_2)\cdot m_1-\lambda ] \cdot [(2k_1+k_2) \cdot m_2-\lambda)]\\&-k_2^2 \cdot m_1 \cdot m_2\\
&=\lambda^2-\lambda \cdot (2k_1+k_2) \cdot (m_1+m_2)-k_2^2 \cdot m_1 \cdot m_2,
\end{align*}
implying that the system has two degrees of freedom (the roots of the characteristic polynomial), which are given by:
\begin{align*}
&\lambda_{1,2}=\frac{1}{2}(2k_1+k_2)\cdot(m_1+m_2)\\
&\pm \frac{1}{2}\sqrt{(2k_1+k_2)^2\cdot(m_1+m_2)^2+4\cdot k_2^2\cdot m_1\cdot m_2}\\
&=\frac{1}{2}(2k_1+k_2)\cdot(m_1+m_2) \\
&\pm \frac{1}{2}(2k_1+k_2)\cdot(m_1+m_2)  \sqrt{1+\frac{4\cdot k_2^2\cdot m_1\cdot m_2}{(2k_1+k_2)^2\cdot(m_1+m_2)^2}}
\end{align*}

Assuming that the springs remain constant during the experiment, we note that the two modes of the system are governed by the sum of the masses $m_1+m_2$, since the term consisting of their product $m_1 \cdot m_2$ is of smaller order of magnitude. This implies that the hidden state variable in each trial could be parameterized by $\bar{\theta}_i=m_1+m_2$.

\subsubsection{Results}
We apply Algorithm \ref{alg:proposed} to the input data $\{\boldsymbol{y}_i(t_j)\}_{j=1}^M$. As a baseline, we apply a similar algorithm to the same data, but instead of using the modified Mahalanobis distance we use the Euclidean distance between the features $\boldsymbol{z}_i$ (in Step 2 of Algorithm \ref{alg:proposed}, the affinity matrix ${W}$ is computed using $\|\boldsymbol{z}_i-\boldsymbol{z}_l\|_2^2$ instead of $d(\boldsymbol{z}_i,\boldsymbol{z}_l)$).
Figure \ref{fig:Toy_Problem}(C)-(D) displays a two dimensional representation of the measurements resulting from the applications of the two algorithms. 
Figure \ref{fig:Toy_Problem}(C) shows the results of Algorithm \ref{alg:proposed} and Figure \ref{fig:Toy_Problem}(D) shows the results of the baseline algorithm. Each point in the figures represents a state (trial). The points are colored by the corresponding value of $m_1+m_2$.

We observe that the two dimensional representation obtained by Algorithm \ref{alg:proposed} is organized according to the sum of the masses in each state, a result that is consistent with the analysis of the system presented above. Moreover, we observe that using the modified Mahalanobis distance rather than the Euclidean distance between the features $\boldsymbol{z}_i$ is critical for the recovery of the true hidden state of the system.

\subsection{Theoretical Analysis of the Extraction of the State Variables}\label{subsection:theoretical_analysis}

In this sub-section, we present the theoretical analysis supporting the proposed method.
We recall that our goal is to find a pairwise distance between features of the measurements that satisfies \eqref{eq:goal_dist}, thereby revealing the distance between the hidden state variables. 

For this purpose, we compute a suitable set of features for each set of measurements that carries sufficient information about the state variables. 
Assuming that the process $\boldsymbol{y}_i(t)$ is stationary and ergodic and that we have an infinitely large number of measurements from each state, the two features in \eqref{eq:z_i} and \eqref{eq:c_i} can be recast as:
\begin{align}
\boldsymbol{z}_i&=\mathbb{E}[\boldsymbol{y}_i(t)] \label{eq:z_linear}\\
{C}_i&= \mathrm{Cov} [\boldsymbol{y}_i(t+\delta t)|\boldsymbol{y}_i(t)]. \label{eq:c_overoll}
\end{align}
Namely, the expected value of the It\^{o} process measurements at a specific state, and the covariance matrix of the measurement increments.

Then, as in \eqref{eq:mahalanobis}, use a modified version of the Mahalanobis distance\cite{Coifman_Singer:2008} between the features:
\begin{align}
d(\boldsymbol{z}_{i},\boldsymbol{z}_{l})=\frac{1}{2}(\boldsymbol{z}_{i}-\boldsymbol{z}_{l})^{\top}({C}_{i}^{-1}+{C}_{l}^{-1})(\boldsymbol{z}_{i}-\boldsymbol{z}_{l}).
\label{eq:mahalanobis2}
\end{align}

In the sequel, we show that this distance indeed reveals the distance between the hidden state variables.

Our analysis is divided to two cases as follows.

\subsubsection{Direct Access}
In order to simplify the exposition, we consider the case in which the measurements $\boldsymbol{y}_i(t)$ have direct access to system variables and are equal to $\boldsymbol{x}_{i}(t)=(\boldsymbol{\theta}_{i}(t),\boldsymbol{\eta}_{i}(t)) \in \mathbb{R}^{d}$.
\begin{prop}
Given $\boldsymbol{x}_{i}(t)=(\boldsymbol{\theta}_{i}(t),\boldsymbol{\eta}_{i}(t)) \in \mathbb{R}^{d}$, the modified Mahalanobis distance in \eqref{eq:mahalanobis2} between the features $\boldsymbol{z}_i=\mathbb{E}[\boldsymbol{x}_i(t)]$ using the covariance matrices ${C}_i=\mathrm{Cov}[\boldsymbol{x}_i(t+\delta t)|\boldsymbol{x}_i(t)]$ can be written in terms of the Euclidean distance between the underlying state variables as follows:  
\begin{align}\label{eq:prop_1}
d(\boldsymbol{z}_{i},\boldsymbol{z}_{l})=\frac{1}{\delta t}\left[\|\bar{\boldsymbol{\theta}}_{i}-\bar{\boldsymbol{\theta}}_{l}\|^{2}+O(\epsilon)\right].
\end{align} 
\label{prop:1}
\end{prop}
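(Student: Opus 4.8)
The plan is to exploit the fact that \eqref{eq:x_eq} with the quadratic potential $U$ is a multivariate Ornstein--Uhlenbeck process, for which both the stationary moments and the short-time transition statistics are available in closed form, and then simply evaluate the features \eqref{eq:z_linear}--\eqref{eq:c_overoll} and plug them into \eqref{eq:mahalanobis2}. First I would compute $\boldsymbol{z}_i$. Since $\nabla U(\boldsymbol{x}) = \boldsymbol{x}-\bar{\boldsymbol{x}}_i$, the dynamics are $d\boldsymbol{x}_i(t) = -(\boldsymbol{x}_i(t)-\bar{\boldsymbol{x}}_i)\,dt + \Lambda\,d\boldsymbol{w}_i(t)$, whose unique stationary law is Gaussian with mean $\bar{\boldsymbol{x}}_i$; hence under the stationarity/ergodicity assumption $\boldsymbol{z}_i = \mathbb{E}[\boldsymbol{x}_i(t)] = \bar{\boldsymbol{x}}_i = (\bar{\boldsymbol{\theta}}_i,\bar{\boldsymbol{\eta}}_i)$, so $\boldsymbol{z}_i-\boldsymbol{z}_l = (\bar{\boldsymbol{\theta}}_i-\bar{\boldsymbol{\theta}}_l,\ \bar{\boldsymbol{\eta}}_i-\bar{\boldsymbol{\eta}}_l)$.

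Next I would compute $C_i$. Integrating the OU SDE gives
\[
\boldsymbol{x}_i(t+\delta t) = \bar{\boldsymbol{x}}_i + e^{-\delta t}(\boldsymbol{x}_i(t)-\bar{\boldsymbol{x}}_i) + \int_t^{t+\delta t} e^{-(t+\delta t-s)}\Lambda\,d\boldsymbol{w}_i(s),
\]
so by the It\^o isometry $C_i = \mathrm{Cov}[\boldsymbol{x}_i(t+\delta t)\mid \boldsymbol{x}_i(t)] = \big(\int_0^{\delta t}e^{-2u}\,du\big)\Lambda\Lambda^\top = \tfrac{1-e^{-2\delta t}}{2}\,\Lambda\Lambda^\top$, which is independent of the state index and of the conditioning value (the process is Gaussian). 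Writing $\gamma(\delta t) = \tfrac{1-e^{-2\delta t}}{2} = \delta t + O(\delta t^2)$ and recalling $\Lambda\Lambda^\top = \mathrm{diag}(I_{d_1\times d_1},\ \tfrac{1}{\epsilon^2}I_{d_2\times d_2})$, we get $C_i^{-1} = \gamma(\delta t)^{-1}\mathrm{diag}(I_{d_1\times d_1},\ \epsilon^2 I_{d_2\times d_2})$, and hence $\tfrac12(C_i^{-1}+C_l^{-1}) = \gamma(\delta t)^{-1}\mathrm{diag}(I_{d_1\times d_1},\ \epsilon^2 I_{d_2\times d_2})$.

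Finally I would substitute into \eqref{eq:mahalanobis2}. The block-diagonal form of $\tfrac12(C_i^{-1}+C_l^{-1})$ decouples the $\boldsymbol{\theta}$- and $\boldsymbol{\eta}$-directions, yielding
\[
d(\boldsymbol{z}_i,\boldsymbol{z}_l) = \gamma(\delta t)^{-1}\Big[\|\bar{\boldsymbol{\theta}}_i-\bar{\boldsymbol{\theta}}_l\|^2 + \epsilon^2\,\|\bar{\boldsymbol{\eta}}_i-\bar{\boldsymbol{\eta}}_l\|^2\Big].
\]
Since the baselines $\bar{\boldsymbol{\eta}}_i$ are fixed constants we have $\epsilon^2\|\bar{\boldsymbol{\eta}}_i-\bar{\boldsymbol{\eta}}_l\|^2 = O(\epsilon)$, and $\gamma(\delta t)^{-1} = \tfrac{1}{\delta t}(1+O(\delta t))$, which gives \eqref{eq:prop_1}. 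The computation is essentially mechanical once the OU structure is recognized; the only points requiring care are (i) justifying the replacement of the empirical features by their ergodic limits \eqref{eq:z_linear}--\eqref{eq:c_overoll}, which the standing assumptions grant, and (ii) the bookkeeping of the remainder — in particular recognizing that the ``$O(\epsilon)$'' in the statement absorbs the $\epsilon^2\|\bar{\boldsymbol{\eta}}_i-\bar{\boldsymbol{\eta}}_l\|^2$ term, i.e. the anisotropic $\epsilon^2$ suppression of the noise direction is precisely the mechanism by which the modified Mahalanobis distance attenuates $\boldsymbol{\eta}$. I expect the main (conceptual rather than computational) point to stress is this attenuation, since it is what makes the result meaningful and what the general, nonlinear-$f$ case will build on.
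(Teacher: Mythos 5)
Your proposal is correct and takes essentially the same route as the paper's proof: identify \eqref{eq:x_eq} as an Ornstein--Uhlenbeck process, obtain $\boldsymbol{z}_i=\bar{\boldsymbol{x}}_i$ and a block-diagonal increment covariance proportional to $\Lambda^2$, and let the $\epsilon^2$ factor in the noise block absorb the $\bar{\boldsymbol{\eta}}$-term into $O(\epsilon)$. The only difference is cosmetic: you compute the conditional covariance exactly as $\frac{1-e^{-2\delta t}}{2}\Lambda\Lambda^{\top}$ and then expand in $\delta t$, whereas the paper's Lemma \ref{lemma:2} uses the leading-order value $\delta t\,\Lambda^2$ directly.
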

See Appendix \ref{app1} for the proof.

Proposition \ref{prop:1} implies that by assuming direct access to the state and noise variables, the modified Mahalanobis distance with the proposed features satisfies \eqref{eq:goal_dist}. Note that in this case, although we have access to the system variables $\boldsymbol{x}_i(t)$, we do not know which of them is a state variable $\boldsymbol{\theta}_i$ and which is considered noise. Our features and distance function enable us to separate the two kinds of variables and to obtain a distance between the state variables in an implicit manner without prior knowledge.

\subsubsection{Non-Linear Measurements}
We now consider the case where the observations are a function of the system variables, i.e., $\boldsymbol{y}_{i}(t)=f(\boldsymbol{x}_{i}(t))$, where $f:\mathbb{R}^d \rightarrow \mathbb{R}^s$ is some smooth bi-Lipschitz function. Considering this case requires an additional assumption on the function $f$ and a small modification of the features $\boldsymbol{z}_i$, which are described next. 

\begin{assumption}
For any two realizations  $\boldsymbol{x}_i$ and $\boldsymbol{x}_l$ of state and noise variables, we have:
\begin{align}
&\frac{({f}_{p_1p_2}^k(\boldsymbol{x}_i)-{{f}}_{p_1p_2}^k(\boldsymbol{x}_l))^2}{{f}_{p_1}^k(\boldsymbol{x}_i){f}_{p_2}^k(\boldsymbol{x}_i)+{f}^k_{p_1}(\boldsymbol{x}_l){f}^k_{p_2}(\boldsymbol{x}_l)} \ll 1, 
\end{align}
for $1\leq k\leq s$ and $1 \leq p_{1},p_{2} \leq d$, where the subscripts correspond to partial derivatives, i.e., ${f}_p^k=\frac{\partial f^k}{\partial x^p}$ and ${f}_{p_1p_2}^k=\frac{\partial^2 f^k}{\partial x^{p_1}\partial x^{p_2}}$, and we recall that superscripts correspond to specific elements in a vector, i.e.,
${f}^k:\mathbb{R}^d \rightarrow \mathbb{R}$ is the $k$-th element of $f$.
\label{assumption:1}
\end{assumption}

Assumption 1 could be viewed as an additional smoothness property of the observation function. This assumption holds for functions that have small local changes in their gradient, and it includes any second-order polynomial function.

\begin{prop}\label{prop:2}
Given observations $\boldsymbol{y}_{i}(t)=f(\boldsymbol{x}_{i}(t))$, where $f:\mathbb{R}^d \rightarrow \mathbb{R}^s$ is a smooth function satisfying Assumption \ref{assumption:1}, the modified Mahalanobis distance with the features 
\begin{align}
\label{eq:z_i_nonlinear}
\boldsymbol{z}_{i}&=\mathbb{E}\left[\lim_{\delta t\to 0} \frac{\boldsymbol{y}_i(\delta t + t)-\boldsymbol{y}_i(t)}{\delta t} +\boldsymbol{y}_i(t)\right]\\
{C}_i&=\textrm{Cov}[\boldsymbol{y}_i(t+\delta t)|\boldsymbol{y}_i(t)] \nonumber
\end{align}
can be expressed as:  
\begin{align}
d(\boldsymbol{z}_i,\boldsymbol{z}_l)&=\|\bar{\boldsymbol{\theta}_i}-\bar{\boldsymbol{\theta}_l}\|^2  +O(\|\boldsymbol{y}_i-\boldsymbol{y}_l\|^4) + O(\epsilon).
\end{align}
\label{prop:2}
\end{prop}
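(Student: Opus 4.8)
The plan is to reduce Proposition \ref{prop:2} to Proposition \ref{prop:1} by linearizing the observation function around the baseline configurations and pushing the resulting curvature terms into the error, using Assumption \ref{assumption:1} to keep them under control. The first step is to apply It\^{o}'s lemma to $\boldsymbol{y}_i(t)=f(\boldsymbol{x}_i(t))$ with the dynamics \eqref{eq:x_eq}. This shows that $\boldsymbol{y}_i$ is itself an It\^{o} process whose drift is $\mathcal{L}f(\boldsymbol{x}_i(t))$, where $\mathcal{L}$ is the generator of \eqref{eq:x_eq} (so that $\mathcal{L}f^k=-\sum_p f_p^k\,(x_p-\bar{x}_{i,p})+\tfrac12\sum_p \Lambda_{pp}^2 f_{pp}^k$), and whose diffusion coefficient is $J_f(\boldsymbol{x}_i(t))\,\Lambda$, with $J_f$ the Jacobian of $f$. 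Hence $C_i=\mathrm{Cov}[\boldsymbol{y}_i(t+\delta t)\mid\boldsymbol{y}_i(t)]=\delta t\, J_f(\boldsymbol{x}_i)\Lambda\Lambda^\top J_f(\boldsymbol{x}_i)^\top+o(\delta t)$, so $C_i^{-1}\approx\tfrac{1}{\delta t}\,J_f(\boldsymbol{x}_i)^{-\top}\Lambda^{-2}J_f(\boldsymbol{x}_i)^{-1}$ (with the appropriate pseudo-inverse when $s\neq d$); and, by the martingale property combined with the short-time semigroup expansion, the modified feature in \eqref{eq:z_i_nonlinear} can be written as $\boldsymbol{z}_i=\mathbb{E}[f(\boldsymbol{x}_i(t))+\mathcal{L}f(\boldsymbol{x}_i(t))]$, i.e., the value of $f$ plus a drift correction whose role is to compensate the leading It\^{o}/curvature bias of $\mathbb{E}[f(\boldsymbol{x}_i(t))]$.

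The second step is to expand $\boldsymbol{z}_i$ about the baseline. Writing $\boldsymbol{x}_i(t)=\bar{\boldsymbol{x}}_i+\boldsymbol{\xi}_i(t)$ with $\boldsymbol{\xi}_i(t)$ the zero-mean stationary fluctuation of the Ornstein--Uhlenbeck process \eqref{eq:x_eq} (whose covariance is diagonal and proportional to $\Lambda^2$), a second-order Taylor expansion of $f$ and of $\mathcal{L}f$ around $\bar{\boldsymbol{x}}_i$, followed by taking expectations, yields $\boldsymbol{z}_i=f(\bar{\boldsymbol{x}}_i)+\boldsymbol{\rho}_i$, where $\boldsymbol{\rho}_i$ collects the residual terms built from second- and higher-order derivatives of $f$ at $\bar{\boldsymbol{x}}_i$. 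Here Assumption \ref{assumption:1} enters: its dimensionless bound on $(f_{p_1p_2}^k)^2/(f_{p_1}^k f_{p_2}^k)$ is what makes $\boldsymbol{\rho}_i$ negligible relative to the first-order increment $J_f(\bar{\boldsymbol{x}}_i)(\bar{\boldsymbol{x}}_i-\bar{\boldsymbol{x}}_l)$ once both are measured in the $C^{-1}$-metric, so that the residuals contribute only at the order of the stated $O(\|\boldsymbol{y}_i-\boldsymbol{y}_l\|^4)$ and $O(\epsilon)$ errors.

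The final step is to insert $\boldsymbol{z}_i-\boldsymbol{z}_l=\bigl(f(\bar{\boldsymbol{x}}_i)-f(\bar{\boldsymbol{x}}_l)\bigr)+(\boldsymbol{\rho}_i-\boldsymbol{\rho}_l)$ into the modified Mahalanobis distance \eqref{eq:mahalanobis2}. Since $C_i^{-1}$ and $C_l^{-1}$ are, up to the $1/\delta t$ factor, the pushforwards of $\Lambda^{-2}$ by $J_f^{-1}$, the computation then mirrors the proof of Proposition \ref{prop:1} and the classical Mahalanobis-distance argument of \cite{Coifman_Singer:2008}: the first-order part of $f(\bar{\boldsymbol{x}}_i)-f(\bar{\boldsymbol{x}}_l)$ contributes $\tfrac{1}{\delta t}(\bar{\boldsymbol{x}}_i-\bar{\boldsymbol{x}}_l)^\top\Lambda^{-2}(\bar{\boldsymbol{x}}_i-\bar{\boldsymbol{x}}_l)=\tfrac{1}{\delta t}\bigl(\|\bar{\boldsymbol{\theta}}_i-\bar{\boldsymbol{\theta}}_l\|^2+\epsilon^2\|\bar{\boldsymbol{\eta}}_i-\bar{\boldsymbol{\eta}}_l\|^2\bigr)$; the symmetric average $\tfrac12(C_i^{-1}+C_l^{-1})$ cancels the cubic cross-term of the Taylor expansion, leaving an $O(\|\boldsymbol{y}_i-\boldsymbol{y}_l\|^4)$ remainder; the noise contribution $\epsilon^2\|\bar{\boldsymbol{\eta}}_i-\bar{\boldsymbol{\eta}}_l\|^2$ is $O(\epsilon)$; and the $\boldsymbol{\rho}$-terms are absorbed into the $O(\|\boldsymbol{y}_i-\boldsymbol{y}_l\|^4)$ remainder by Assumption \ref{assumption:1}. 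Collecting these contributions, and matching the overall $1/\delta t$ normalization to the one in Proposition \ref{prop:1}, yields the claimed identity.

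I expect the second step to be the main obstacle. The difficulty is that the nonlinear corrections enter in two places: through the curvature of $f$ acting on the Ornstein--Uhlenbeck fluctuations inside the feature $\boldsymbol{z}_i$ -- and these fluctuations are large, of order $1/\epsilon$, in the noise directions -- and through the higher-order terms in the expansion $\boldsymbol{y}_i-\boldsymbol{y}_l\approx J_f(\bar{\boldsymbol{x}}_i-\bar{\boldsymbol{x}}_l)$. The crux is to verify that Assumption \ref{assumption:1} is exactly the hypothesis under which all of these corrections, once weighted by the $C^{-1}$-metric (which itself carries compensating $\epsilon^{-2}$ factors in its noise block), collapse into the stated $O(\|\boldsymbol{y}_i-\boldsymbol{y}_l\|^4)$ and $O(\epsilon)$ terms rather than producing an uncontrolled, $\epsilon$-singular contribution.
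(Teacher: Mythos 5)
Your proposal follows essentially the same route as the paper: It\^{o}'s lemma turns the drift-corrected feature into $\bar{\boldsymbol{y}}_i=f(\bar{\boldsymbol{x}}_i)$ plus a second-derivative (It\^{o}) correction term (the paper's Lemma \ref{lemma:3}), the increment covariance becomes $J\Lambda^2 J^\top$ (Lemma \ref{lemma:4}, cited from \cite{Coifman_Singer:2008,dsilva2016data}), and the distance is then expanded exactly as you describe, with the Coifman--Singer argument supplying the $O(\|\cdot\|^4)$ term, Assumption \ref{assumption:1} absorbing the curvature-correction contribution as $O(\epsilon)$, and Proposition \ref{prop:1}'s computation reducing $\|\bar{\boldsymbol{x}}_i-\bar{\boldsymbol{x}}_l\|_M$ to $\|\bar{\boldsymbol{\theta}}_i-\bar{\boldsymbol{\theta}}_l\|^2+O(\epsilon)$. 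The step you flag as the main obstacle is precisely the one the paper settles by directly invoking Assumption \ref{assumption:1} to declare the $\boldsymbol{\phi}_i-\boldsymbol{\phi}_l$ term $O(\epsilon)$, and the $1/\delta t$ normalization you retain is dropped in the paper only because its Lemma \ref{lemma:4} states $C_i=J\Lambda^2 J^\top+O(\delta t)$ without that factor, so there is no substantive divergence.
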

See Appendix \ref{app2} for the proof.

Both Proposition \ref{prop:1} and Proposition \ref{prop:2} imply that, either with direct access or through some unknown observation function, the modified Mahalanobis distance between the proposed features reveals the distance between the state variables and attenuates the contribution of the noise variables.

We conclude this subsection with a couple of remarks.
\begin{remark}
In the specific case where $f$ is the identity function, i.e., $f(\boldsymbol{x})=\boldsymbol{x}$, different features $\boldsymbol{z}_i$ can be computed: either \eqref{eq:z_linear} or \eqref{eq:z_i_nonlinear}). Indeed, using the additional prior information of having direct access to the state and noise variables leads to a better approximation in Proposition \ref{prop:1} compared to Proposition \ref{prop:2}.
In the sequel, we will show that in practice the two definitions of $\boldsymbol{z}_i$ coincide.
\end{remark}

\begin{remark}
Our method relies on the modified Mahalanobis distance proposed in \cite{Coifman_Singer:2008}. In \cite{dsilva2016data}, this distance was used and analyzed in the context of temporal data stemming from a multiscale dynamical system. The theoretical and algorithmic parts of the present work extend \cite{Coifman_Singer:2008} and \cite{dsilva2016data} in two main aspects. First, our work considers a different model than the models considered in \cite{Coifman_Singer:2008} and in \cite{dsilva2016data}. Here, the model consists of controlling variables, which are separated to (desired) intrinsic state variables and to (undesired) noise variables. In addition, the model includes different dynamic regimes. By relying on the analysis in \cite{dsilva2016data}, we present new theoretical results, and consequently, derive new features that are specific for the model considered here. Importantly, in the sequel, we empirically support our model, features, and theoretical results by showing experimental results on DBS.
\end{remark}

\subsubsection{Feature Estimators}
In order to estimate the features in \eqref{eq:z_linear} and \eqref{eq:c_overoll} from the measurements at hand, we use the following assumptions.
First, we assume that the number of measurements at each state $M$ is large. Second, we assume that the measured signal $\boldsymbol{y}_i(t_j)$ is stationary and ergodic with respect to $t_j$ at a fixed state $i$.

For $\boldsymbol{z}_i$, we propose the following estimator:
\begin{align}\label{eq:z_i_estimator}
    \hat{\boldsymbol{z}}_i &=\frac{1}{M}\sum_{j=1}^{M} \boldsymbol{y}_i(t_j) ,
\end{align}
where the relation between the estimator and the desired feature is given by:
\begin{align}{\label{eq:z_i_hat_relation}}
    \boldsymbol{z}_i=\hat{\boldsymbol{z}}_i + O\left(\frac{1}{\delta t M}\right).
\end{align}
Since $M$ is assumed to be large, $\hat{\boldsymbol{z}}_i$ is considered as a good approximation of $\boldsymbol{z}_i$.

The relation in (\ref{eq:z_i_hat_relation}) stems from the following derivation. Since $M$ is large, then by the Law of Large Numbers, the empirical mean converges to the expected value, and so the desired feature can be recast as:
\begin{align}
    \boldsymbol{z}_i&=\mathbb{E}\left[\frac{1}{\delta t}(\boldsymbol{y}_i(\delta t+t)-\boldsymbol{y}_i(t))+\boldsymbol{y}_i(t)\right] \nonumber\\&= \frac{1}{M}\sum_{j=1}^{M} \left[\frac{\boldsymbol{y}_{i}(t_j)-\boldsymbol{y}_{i}(t_{j-1})}{\delta t}+\boldsymbol{y}_{i}(t_{j-1})\right]\nonumber \\
    &=\frac{1}{M}\sum_{j=1}^{M} \left[\frac{\boldsymbol{y}_{i}(t_j)-\boldsymbol{y}_{i}(t_{j-1})}{\delta t}\right] +\frac{1}{M}\sum_{j=1}^{M} \boldsymbol{y}_{i}(t_{j-1}) \nonumber\\
    &=\frac{1}{M}\sum_{j=1}^{M} \left[\frac{\boldsymbol{y}_{i}(t_j)-\boldsymbol{y}_{i}(t_{j-1})}{\delta t}\right] +\hat{\boldsymbol{z}}_i.\label{eq:z_i_derive2}
\end{align}
The first term in \eqref{eq:z_i_derive2} is a telescopic sum; after cancellation, this term is equal to $O(\frac{1}{\delta t M})$, leading to \eqref{eq:z_i_hat_relation}.
We obtain that the feature estimate $\hat{\boldsymbol{z}}_i$ in \eqref{eq:z_i_estimator} is a good estimate of the proposed feature \eqref{eq:z_i_nonlinear} in the nonlinear case by \eqref{eq:z_i_hat_relation}. We note that the same feature estimate in \eqref{eq:z_i_estimator} can serve as a good estimate of the proposed feature \eqref{eq:z_linear} in the direct access case as well. Therefore, in practice, we can use the same feature estimate for both cases.


\section{Experimental Results on DBS} \label{section:DBS}

\begin{figure*}[!p]
\centerline{\includegraphics[width=181mm,height=165mm]{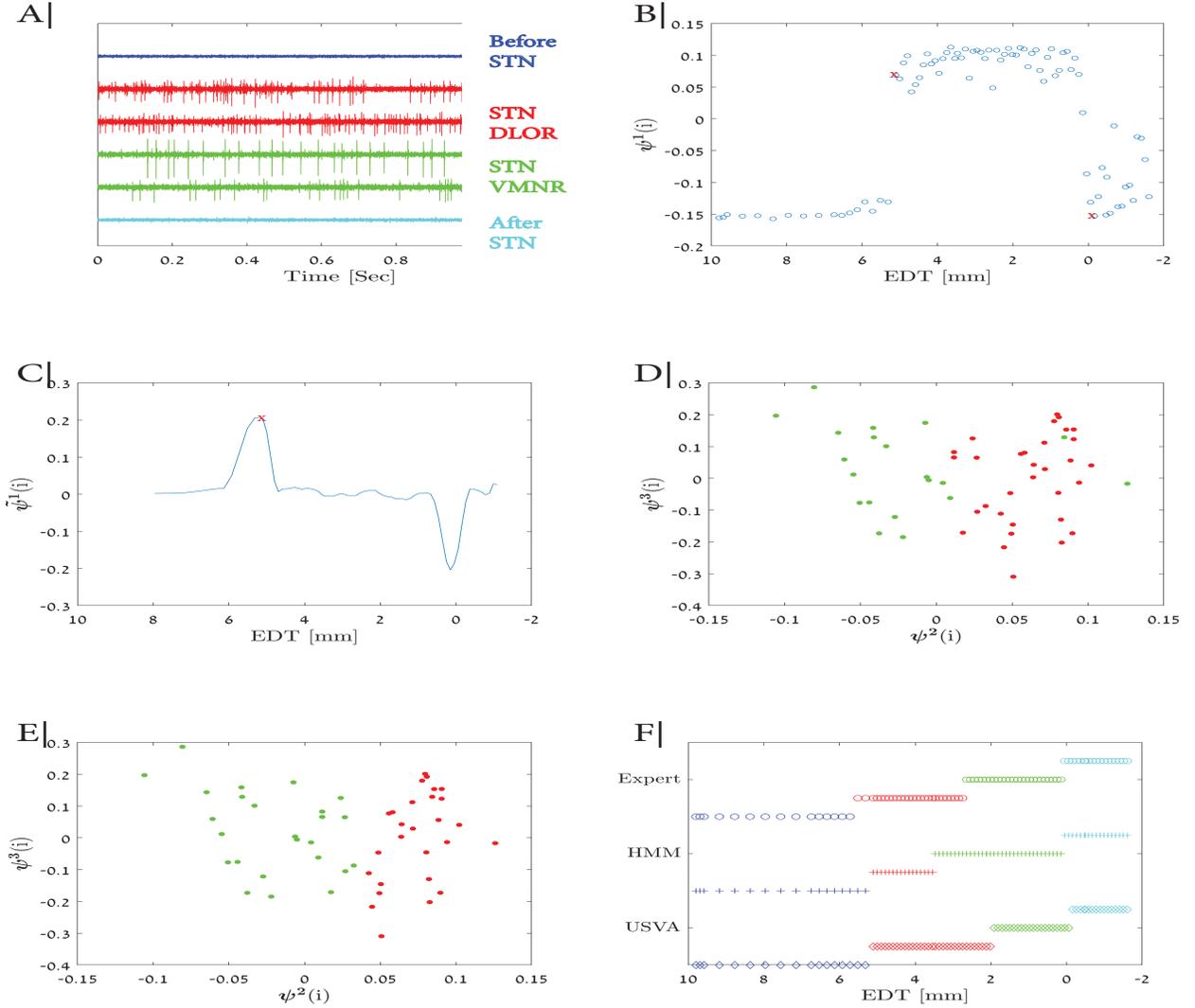}}
\caption{STN border detection obtained by our proposed method -  Unsupervised State Variables approximation (USVA) - applied to a single trajectory. (A) The input data -- time series of measurements of the neuronal activity along the pre-planned trajectory. We show 6 representative raw signal traces out of the 84 signal traces at various depths along the trajectory (a single DBS track) recorded from a Parkinson's disease patient.
The different time series are colored according to the experts' labels: white matter before STN in blue, Dorso-Lateral Oscillatory Region (DLOR) in red, Ventro Medial Non-oscillatory Region (VMNR) in green, and white matter after STN in light blue.
(B) The 1D embedding obtained by the USVA method. The Y-axis displays the approximate state variable value, i.e., the
value of the most dominant eigenvector as a function of the Estimated Distance from Target (EDT). The STN entry and exit locations marked by a human expert are marked by red `x'.
(C) Same as (B) but after pre-processing (smoothing) with a moving average window of size $3$, and computing the difference between the medians of consecutive windows of size $5$.
(D) 2D embedding obtained by the USVA method. The embedded points are colored according to the expert's labels, where red points belong to depths in the DLOR and green points belong to depths in the VMNR. The X-axis indicates the value of the second most dominant eigenvector and the Y-axis indicates the corresponding value of the third most dominant eigenvector.
(E) Same 2D embedding as in (D) but colored according to k-means as suggested in Algorithm \ref{alg:DLOR}. 
(F) A comparison between the detection results obtained by our USVA method (marked with `$\diamond$'), HMM (marked with `+'), and the expert's labels (marked with `o'). Colors are the same as in Figure \ref{fig:DBS}(A).} 
\label{fig:DBS}
\end{figure*}

In this section, we show experimental results on unsupervised detection of target regions during DBS surgery.
We focus on two particular detection tasks: finding the subthalamic nucleus (STN) region and a sub-territory within the STN region, called the dorsolateral oscillatory region (DLOR). 

\subsection{Data and Preprocessing}\label{section:stn_detection}

The dataset was collected at the Hadassah University Medical Central, and it is completely anonymized without any personal identification information. Written informed consent was obtained from all patients and the study was approved by the Institutional Review Board of Hadassah Hospital in accordance with the Helsinki Declaration (reference code: 0168-10-HMO).

The measured signals are time series of neuronal activity at different depths along a pre-planned trajectory recorded by a micro-electrode. The time series at different depths are of varying lengths, depending on the recording time at each depth during the surgery. 

The data was acquired using a Neuro Omega system. The raw data was sampled at 44k Hz by a 16-bit A/D converter (using $\pm 1.25 V$ input range; i.e., $ \sim  2 \mu V$ amplitude resolution). Then, the raw signal was bandpass filtered from 0.075 to 10k Hz, using hardware of 2 and 3 pole Butterworth filters, respectively.

In the context of the problem setting described in Section \ref{section:mehtod}, we refer to each specific depth as a system state, denoted by $i$, for $i={1,..,N}$. Accordingly, let $g_i(\tau_j),  j=1\ldots, T_i$ be the time series of the signal recorded at depth $i$, where $\tau_j$ is the discrete time index and $T_i$ is the length of the signal recorded at depth $i$.

We apply a pre-processing stage to the 1D time series measurements $g_i(\tau_j)$, where we compute the scattering transform \cite{mallat2012group,bruna2015intermittent}. Scattering transform is based on a cascade of wavelet transforms and modulus operators and has been shown to yield an informative representation of time-series measurements.
Let $\boldsymbol{y}_i(t_j)$ denote the resulting scattering transform at time $j= \{1...,M_i\}$ and at depth $i= \{1...,N\} $, where $M_i$ is the number of scattering transform time frames.

The signal acquired at each depth (state of the system) is classified by a human expert into one of four classes: Before STN, STN-DLOR, STN-Ventro Medial Non-oscillatory Region (VMNR), and After STN.

An illustrative example of the data is depicted in Figure \ref{fig:DBS}(A), where we plot the time series measurements from 6 different depths (states), colored according to the experts' labels. 

We observe that signals within the STN region (colored in red and green) have higher variability compared to signals outside the STN region (colored in blue and cyan). Indeed, this variability was used as a feature in previous work, e.g. in \cite{valsky2017s,zaidel2009delimiting,wong2009functional}. We also observe that there is no evident difference between the two classes of signals within the STN region, indicating that DLOR detection is a challenging task.
\subsection{Detection of Sub-Territories of the Subthalamic Nucleus}
\subsubsection{Subthalamic Nucleus (STN) Detection}\label{section:stn_detection}

The proposed algorithm for the detection of the STN region appears in Algorithm \ref{alg:STN}, where we denote by $\text{EDT}(i)$ the $i$th coordinate of the Estimate Distance from Target (EDT) vector designating the specific depth along the pre-planned trajectory. Note that the term EDT is frequently used in this context, and it is known in advance and does not suggest any a-priori knowledge on the target location.

Our empirical examination suggests that the STN location can be determined by the most dominant component, i.e., the entries of eigenvector $\boldsymbol{\psi}^1$ corresponding to the largest eigenvalue. Therefore, the detection of the STN is based only on $\boldsymbol{\psi}^1$ resulting from the application of Algorithm \ref{alg:proposed} with $P=1$ to the pre-processed data.
The detection itself is implemented in steps 3-5.
The main idea is to detect the first sharp transition of values in the entries of $\boldsymbol{\psi}^1$, namely, $\{\boldsymbol{\psi}^1(1), \boldsymbol{\psi}^1(2), \ldots\}$, indicating the entry to the STN region, where the indices of the vector entries represent the depth (state).
In order to alleviate the effect of small perturbations, we smooth the sequence of entries of $\boldsymbol{\psi}^1$ by a moving average with a window of size $3$. Then, we detect the transition by computing the difference between the medians at two consecutive running windows of size $5$ and obtain $\tilde{\boldsymbol{\psi}}^1$. The index at which the maximal difference is obtained is denoted by $i_{\text{en}}$, indicating the depth of the entry point to the STN $\text{EDT}(i_{\text{en}})$.
Once the entry point is determined, the exit point is set as the first point at which $\boldsymbol{\psi}^1(i)$ is smaller than $\boldsymbol{\psi}^1(i_{\text{en}})$. We denote the index of the exit point by $i_{\text{ex}}$ and the corresponding depth by $\text{EDT}(i_{\text{ex}})$.

The result of the application of Algorithm \ref{alg:STN} to the example presented in Figure \ref{fig:DBS}(A) is shown in Figure \ref{fig:DBS}(B)-(C). In Figure \ref{fig:DBS}(B), we plot $\boldsymbol{\psi}^1(i)$ as a function of the depth $\text{EDT}(i)$. In Figure \ref{fig:DBS}(C), we plot $\tilde{\boldsymbol{\psi}}^1(i)$ as a function of the depth $\text{EDT}(i)$.

It is important to note that the eigenvectors are always determined up to a sign. Therefore, in order to eliminate this inherent sign ambiguity, we replace $\boldsymbol{\psi}^1(i)$ by $\text{sign}\left(\delta\right)\cdot\boldsymbol{\psi}^1(i)$, where 
\begin{equation*}
\delta=|\underset{i=2,\ldots,N}{\mathrm{max}}(\tilde{\boldsymbol{\psi}}^1(i))- \tilde{\boldsymbol{\psi}}^1(1)|-|\underset{i=2,\ldots,N}{\mathrm{min}}(\tilde{\boldsymbol{\psi}}^1(i)) - \tilde{\boldsymbol{\psi}}^1(1)|.
\end{equation*}

\begin{algorithm}[t]
	\caption{STN Region Detection}\label{alg:STN}

	\textbf{Input}: ${g_{i}(\tau_j)} \in \mathbb{R}^{T_i}, i={1,..,N}$ -- Time series measurements of neuronal activity at 
    different depths.\\
    $\text{EDT} \in \mathbb{R}^{N}$ -- Vector indicating the Estimated Distance from Target of each depth.\\
	\textbf{Output}: \emph{STN entry point} and \emph{STN exit point}.
	
	\begin{enumerate}
		\item For each time series $g_i(\tau_j)$, compute its Scattering Transform:
		\begin{equation*}
        \boldsymbol{y}_{i}(t_{j})=\Phi(g_i(\tau_j)) \in \mathbb{R}^k,  
		\end{equation*}
		where $j=1,\ldots,M_i$, $i=1,..,N$, and $\Phi$ represents the Scattering transform operator
		\item Compute $\boldsymbol{\psi}^1$ according to Algorithm \ref{alg:proposed}
		\item Compute $\Tilde{\boldsymbol{\psi}}^1$ by applying a moving average with a window of size $3$ samples to $\boldsymbol{\psi}^1$, and then, compute the difference between the medians at two consecutive running windows of size $5$ samples
		\item $i_{\text{en}} =  \underset{i}{\mathrm{argmax}} \quad \Tilde{\boldsymbol{\psi}}^1(i)$
		\item $i_{\text{ex}} =  \underset{i}{\mathrm{argmin}} \quad \{i:\boldsymbol{\psi}^1(i) \leq \boldsymbol{\psi}^1(i_{\text{en}}) \quad \text{and} \quad i >  i_{\text{en}} \}$
		\item Set the \emph{STN entry point} as $\text{EDT}(i_{\text{en}})$, and the \emph{STN exit point} as $\text{EDT}(i_{\text{ex}})$ 

	\end{enumerate}
\end{algorithm}

\subsubsection{Dorsolateral Oscillatory Region (DLOR) Detection}

The proposed algorithm for the detection of the DLOR appears in Algorithm \ref{alg:DLOR}. Since the transitions between the sub-territories of the STN region are subtle and not as distinct as the transition into and out of the STN, we perform two adjustments with respect to Algorithm \ref{alg:STN}. First, we assume that the information on subtle changes in the system's state variables is manifested deeper in the spectrum, namely in eigenvectors corresponding to smaller eigenvalues. Therefore, we use more than one coordinate (eignevectors) to embed the measurements.
Second, we use a small prior on the data -- that the STN region is divided into continuous regions. Accordingly, we modify the diffusion operator as follows:
\begin{align}\label{eq:K_t}
{K}^t = {K} + {K}^s
\end{align}
where ${K}$ is the operator defined in \eqref{eq:K} and is used for the detection of the STN region, and ${K}^s$ is a kernel that enhances temporal proximity, which is given by:
\begin{align}
K^s_{i,l}&=\frac{W^s_{i,l}}{\boldsymbol{w}^s(i)} \quad ,\quad \boldsymbol{w}^s(i)=\sum_{l=1}^{N}W^s_{i,l} \label{eq:K_s_two}
\end{align}
where
\begin{align}
W^s_{i,l}&=\exp\left\{-{\frac{\|\text{EDT}(i)-\text{EDT}(j)\|^2}{\epsilon_s}}\right\},
\end{align}
where $\epsilon_s$ is the kernel scale.

In accordance with the above adjustments, we apply eigenvalue decomposition to ${K}^t$ and represent each depth in the STN region using \emph{two} eigenvectors, $\boldsymbol{\psi}^2$ and $\boldsymbol{\psi}^3$, corresponding to the third and fourth largest eigenvalues. We exclude $\boldsymbol{\psi}^0$ and $\boldsymbol{\psi}^1$ because $\boldsymbol{\psi}^0$ is trivial and $\boldsymbol{\psi}^1$ contains information on the STN boundaries, which is already exploited for the detection of the STN.  
We note that we examined the use of different numbers of eigenvectors, and we choose to represent each depth using two eigenvectors since it led to the best empirical results. Yet, our empirical test suggests that the results are not sensitive to using a different number of eigenvectors.

This representation enables us to find separation in the embedded space. 
Specifically, we apply K-means \cite{hartigan1979algorithm} to the following embedding of each depth within the STN region:
\begin{align}\label{eq:alg_3_rep}
    R(i)=(\boldsymbol{\psi}^2(i),\boldsymbol{\psi}^3(i),\text{EDT}(i)).
\end{align}
We note that since the DLOR is a continuous region we added the third coordinate that encourages temporal continuity of the separation. This third coordinate is appropriately scaled so that it fits the dynamical range of the other two coordinates. We apply K-means with $k=2$, initialized with the entry depth and exit depth of the STN region, and obtain two clusters within the STN. The cluster that includes the entry depth to the STN region is denoted as the DLOR, and the other cluster, which includes the exit depth from the STN region, is denoted as the STN-VMNR.
An example of the 2D embedding of measurements from depths within the STN region is shown in Figure \ref{fig:DBS}(D)-(E). In Figure \ref{fig:DBS}(D), the points are colored according to the labels obtained by a human expert, where red points belong to depths within the DLOR and green points belong to depths labeled as STN-VMNR. The corresponding K-means labels are displayed in Figure \ref{fig:DBS}(E).
Indeed, we see that our unsupervised detection coincides with the expert's labels.
We note that although the embedding in Figure \ref{fig:DBS}(D) may suggest that the classification of the DLOR/STN-VMNR could be based on the sign of $\psi^2$, inspecting other trajectories indicates that it does not apply in general and that the clustering should be based on a combination of $\psi^2$ and $\psi^3$.

Finally, based on Algorithms \ref{alg:STN} and \ref{alg:DLOR}, we cluster the data according to the 4 labels.
A visual comparison between the labels obtained by our unsupervised method and by the supervised HMM algorithm with respect to labels given by an expert to data from a specific example is presented in Figure \ref{fig:DBS}(F). For convenience, our method is denoted by USVA (Unsupervised State Variables Approximation). We see that in the presented example, our unsupervised method is able to detect the STN region with an accuracy that is comparable to the accuracy of the supervised HMM method. In addition, our unsupervised method obtains a superior detection of the DLOR compared to the supervised HMM. We note that these results are with respect to the expert's labels.

\begin{algorithm}[t]
	\caption{DLOR Detection}\label{alg:DLOR}
    
    \textbf{Input}: The affinity matrix ${K} \in \mathbb{R}^{N \times N}$, the STN entry point, the STN exit point\\
    $\text{EDT} \in \mathbb{R}^{N}$ -- the vector indicating the Estimated Distance from Target of each depth.\\
	\textbf{Output}: The \emph{DLOR exit point}.
	
	\begin{enumerate}
		\item Compute a smoothing kernel $K^s$ according to \eqref{eq:K_s_two}
		\item Compute the kernel ${K}^{t}$ according to \eqref{eq:K_t} 
		\item Apply eigenvalue decomposition to ${K}^t$ and obtain its eigenvalues and eigenvectors
		\item Represent each depth in the STN region according to  \eqref{eq:alg_3_rep}, i.e., by $R(i)=(\boldsymbol{\psi}^2(i),\boldsymbol{\psi}^3(i),\text{EDT}(i))$ 
		\item Divide all depth representations $R(i)$ into 2 clusters, DLOR and STN-VMNR, using K-means initialized with the STN entry ($i_{\text{en}}$) and exit ($i_{\text{ex}}$) points
		\item Set $i_{d} =  \underset{i}{\mathrm{argmin}} \{R(i) \in \text{STN} \quad \text{-VMNR} \}$
		\item Set the \emph{DLOR exit point} as $\text{EDT}(i_{d})$ 
	\end{enumerate}
\end{algorithm}

\subsection{Quantitative Detection Results}
\begin{figure*}[!t]
\centerline{\includegraphics[width=151mm,height=90mm]{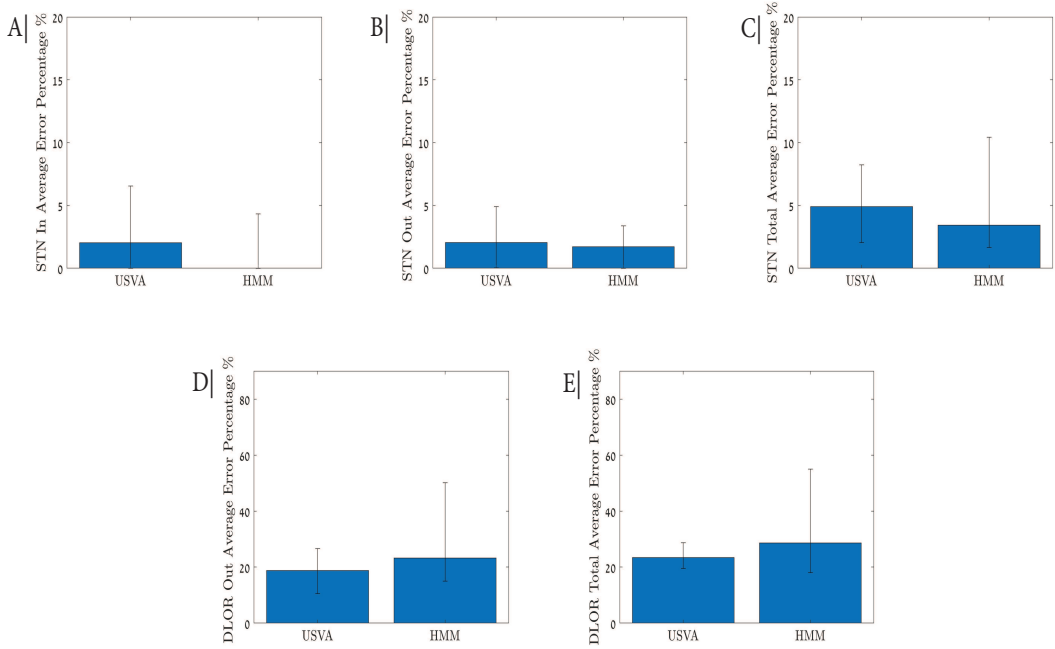}}
\caption{Comparison between our method and the HMM algorithm. The comparison is based on the experts' labels of 25 different trajectories of 16 different patients. The blue thick bars indicate the median error percentage of each task and the black thin bars indicate the interquartile range (IQR). 
(A) The median error in detecting the entry point to the STN region.
(B) The median error in detecting the exit from the STN region.
(C) The median overall error in detecting the STN region.
(D) The median error in detecting the exit from the DLOR.
(E) The median overall error in detecting the DLOR.}
\label{fig:Results}
\end{figure*}

We apply our method (Algorithm \ref{alg:STN} and Algorithm \ref{alg:DLOR}) to $25$ different trajectories recorded from $16$ patients, and we compare the results to the results obtained by the HMM algorithm proposed in \cite{valsky2017s}, which is considered the gold-standard.

Each trajectory consists of three transition points of interest: the STN entry, the DLOR exit, and the STN exit (note that the DLOR entry is the same as the STN entry). In order to quantitatively evaluate the detection, we measure the distance between the transition point marked by the human expert and the detected transition point. For the purpose of normalization, we divide the distance by the size of the respective region.
Consequently, we have a total of six performance measures: STN entry and exit errors (divided by the size of the STN region), DLOR entry and exit errors (divided by the size of the DLOR region), and the overall STN and DLOR errors. Note that the STN entry error and the DLOR entry error differ only by the normalization factor, since the STN entry point coincides with the DLOR entry point. 
The median and interquartile range (IQR) of the five performance measures are reported in Figure \ref{fig:Results}. To complement the experimental study, we also report their mean and standard deviation in percentage in Table \ref{tab:results_table_1} and Table \ref{tab:results_table_2}.

\begin{table}[!ht]
\centering
\caption{The STN borders detection results obtained by the our method(USVA) and the HMM algorithm.
The presented values are the average border detection error (over the $25$ tested trajectories) and the standard deviation with respect to the identification of a human expert. The error is reported in percentage relative to the STN size.}
 \begin{tabular}{||c c c c ||} 
 \hline
 &STN entry&STN exit& STN overall\\  
 \hline
 USVA & $4.77 \pm 6.97$ & $3.95 \pm 5.53$ & $8.72 \pm 11.05$ \\ 
 \hline
 HMM & $4.41 \pm 8.62$ & $2.89 \pm 4.56$ & $7.31 \pm 9.8$\\[0.1ex]
 \hline
\end{tabular}
\label{tab:results_table_1}
\end{table}
\begin{table}[!ht]
\centering
\caption{Same as in Table \ref{tab:results_table_1}, but for the DLOR detection. The error is reported in percentage relative to the DLOR size.}
 \begin{tabular}{|| c c c||} 
 \hline
 & DLOR exit  & DLOR overall \\  
 \hline
 USVA  & $17.89 \pm 13.68$ & $24.49 \pm 17.41$  \\ 
 \hline
 HMM   & $39.16 \pm 34.87$ &$43.86 \pm 37.03$\\[0.1ex]
 \hline 
\end{tabular}
\par
  
\label{tab:results_table_2}
\end{table}
We remark that in our performance evaluation, failures to detect the DLOR exit point is considered to be a $100 \%$ error.
In addition, we note that the DLOR performance measure values are higher than the STN performance measures; this is due to the normalization by the size of the DLOR region, which is smaller than the STN region.

We observe that our method attains results comparable to the gold-standard in the detection of the STN region and outperforms the gold-standard in the detection of the DLOR. Importantly, our method is unsupervised whereas the HMM-based method is supervised, and therefore, could be biased towards the labeling of the specific expert labels, which were used for training.

\section{Discussion} \label{section:discussion}

We presented an agnostic method that can be used during DBS surgery, facilitating physicians in the accurate detection of the STN and the DLOR. After collecting all the MERs during surgery, they can be fed into a system that implements Algorithms \ref{alg:STN} and \ref{alg:DLOR}. The system does not need to be manually tuned and provides recommendations regarding the STN and the DLOR locations. By the nature of our method, these recommendations are not biased towards the labeling of specific experts, and therefore, can advise physicians by helping them to validate their decisions.
It is important to note that our method can recommend a location of the target regions only after all the MERs have been acquired. 

The run-time of our of algorithm is mainly governed by the pre-processing stage of each MER (scattering transform). Therefore, it highly depends on the number of MERs taken from different trajectory depths and the recording time of each MER. After acquiring the MERs from the entire trajectory, the total run-time of our method is approximately 3-5 minutes on a standard personal computer. This run-time can be significantly reduced to a few seconds if we apply the pre-processing stage during the recording of each MER.

We wish to remark that our unsupervised algorithm is able to produce an embedding of measurements from which the STN and the DLOR locations can be accurately identified. However, this embedding does not have an inverse map.
In the context of DBS, if such a map existed, then it could be used to extract the signal properties characterizing the different regions and the transitions between them. In future work, we plan to extend our work in this direction, and attempt to reveal the latent governing variables of the MERs that determine the locations of the STN and the DLOR.

This work can also serve as a stepping-stone for several other research directions. One future research direction is the application of our method to the detection of another brain area, called the Globus Pallidus (GP), which is of interest during a different DBS surgery for treating advanced Parkinson’s disease and dystonia \cite{valsky2020real}. Since the general setup of the GP detection task is very similar to the STN and DLOR detection tasks, we believe that with only mild adjustments our unsupervised method can also be applied there.
Another research direction stems from the fact that we need to compute specially-tailored features based on the measurements without any prior knowledge, and then to estimate their covariance. Finding better estimators for the covariance matrix based on the data, for example using shrinkage \cite{gavish2019optimal,donoho2018optimal}, may significantly improve the method results and enable us to extend the scope to other applications. 
Perhaps the most significant future research direction involves the generalization of the proposed method to multiple sets of measurements from different modalities. 
In the context of manifold learning, multimodal data fusion has attracted much attention recently, e.g. \cite{lederman2018learning,salhov2019multi}. 
We intend to incorporate the proposed variant of the Mahalanobis distance for the purpose of devising unsupervised target and anomaly detection methods for multi-channel and multi-modal data.

\section*{Acknowledgments}
We would like to convey our gratitude to Hadassah University Medical Central for the willingness to share the data with us. We also wish to thank Omer Naor from Alpha Omega for fruitful early stage discussions, and Or Yair, Pavel Lifshits, and Izhak Bucher for their help with the mechanical system experiment.
This research was supported by the Pazy Foundation (grant No. 78-2018) and by the Technion Hiroshi Fujiwara Cyber Security Research Center.
R.T. acknowledges the support of the Schmidt Career Advancement Chair in AI.

\appendices

\section{Proof of Proposition \ref{prop:1}}\label{app1}
The proof of the proposition relies on the following results.
\begin{lemma}
The expected value of the It\^{o} process $\boldsymbol{x}_i(t)$ in \eqref{eq:x_eq} is $E[\boldsymbol{x}_i(t)]=\bar{\boldsymbol{x}}_i$.
\label{lemma:1}
\end{lemma}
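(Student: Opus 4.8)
The plan is to recognize \eqref{eq:x_eq} as a multivariate Ornstein--Uhlenbeck process and compute its first moment directly. Since $U$ is the quadratic potential $U(\boldsymbol{x})=\frac{1}{2}(\boldsymbol{x}-\bar{\boldsymbol{x}}_i)^\top(\boldsymbol{x}-\bar{\boldsymbol{x}}_i)$, its gradient is the affine map $\nabla U(\boldsymbol{x})=\boldsymbol{x}-\bar{\boldsymbol{x}}_i$, so \eqref{eq:x_eq} reads $d\boldsymbol{x}_i(t)=-(\boldsymbol{x}_i(t)-\bar{\boldsymbol{x}}_i)\,dt+\Lambda\,d\boldsymbol{w}_i(t)$. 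Writing this in integral form,
\begin{align*}
\boldsymbol{x}_i(t)=\boldsymbol{x}_i(0)-\int_0^t\bigl(\boldsymbol{x}_i(s)-\bar{\boldsymbol{x}}_i\bigr)\,ds+\int_0^t\Lambda\,d\boldsymbol{w}_i(s).
\end{align*}

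First I would take expectations of both sides. The stochastic integral $\int_0^t\Lambda\,d\boldsymbol{w}_i(s)$ is a martingale with zero mean (the integrand is deterministic and square-integrable on $[0,t]$), and Fubini's theorem lets us interchange $\mathbb{E}$ with the Lebesgue integral in $s$. Hence, setting $\boldsymbol{m}(t):=\mathbb{E}[\boldsymbol{x}_i(t)]$, we obtain $\boldsymbol{m}(t)=\boldsymbol{m}(0)-\int_0^t(\boldsymbol{m}(s)-\bar{\boldsymbol{x}}_i)\,ds$, i.e.\ the linear ODE $\dot{\boldsymbol{m}}(t)=-(\boldsymbol{m}(t)-\bar{\boldsymbol{x}}_i)$, whose unique solution is $\boldsymbol{m}(t)=\bar{\boldsymbol{x}}_i+e^{-t}\bigl(\boldsymbol{m}(0)-\bar{\boldsymbol{x}}_i\bigr)$.

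Finally I would invoke the standing assumption used throughout Section \ref{subsection:theoretical_analysis}, namely that $\boldsymbol{x}_i(t)$ --- equivalently $\boldsymbol{y}_i(t)$ --- is stationary. Stationarity forces $\boldsymbol{m}(t)$ to be independent of $t$, which by the displayed solution is possible only if $\boldsymbol{m}(0)=\bar{\boldsymbol{x}}_i$; therefore $\mathbb{E}[\boldsymbol{x}_i(t)]=\bar{\boldsymbol{x}}_i$ for all $t$. Equivalently, one may note that the invariant law of this Ornstein--Uhlenbeck process is the Gaussian $\mathcal{N}(\bar{\boldsymbol{x}}_i,\frac{1}{2}\Lambda\Lambda^\top)$, obtained by solving the Lyapunov equation for the stationary covariance, and read off the mean. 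There is no substantial obstacle here; the only point requiring care is the role of the initial condition --- the identity is an equilibrium statement, exact under stationarity and otherwise valid only in the $t\to\infty$ limit --- so the proof should state explicitly where stationarity enters.
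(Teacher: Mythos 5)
Your proof is correct and follows essentially the same route as the paper: recognize \eqref{eq:x_eq} as an Ornstein--Uhlenbeck process, take expectations of the integrated SDE (the stochastic integral having zero mean), reduce to a linear ODE for the mean, and conclude either from stationarity or from an initial condition with mean $\bar{\boldsymbol{x}}_i$. Your version, which solves $\dot{\boldsymbol{m}}(t)=-(\boldsymbol{m}(t)-\bar{\boldsymbol{x}}_i)$ directly and exhibits the exponential relaxation, in fact subsumes the paper's alternative initial-condition argument and makes the role of stationarity explicit.
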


\begin{proof}[Proof of Lemma \ref{lemma:1}]
The measurements evolution in time can be represented explicitly by:
\begin{align}
d\boldsymbol{x}_i(t)&=(\bar{\boldsymbol{x}}_i-\boldsymbol{x}_i(t))dt+{\Lambda} d\boldsymbol{w}_i(t)
\end{align} 
The evolution of the $k$th element of $\boldsymbol{x}_i(t)$ is given by 
\begin{align}
dx_i^k&=(\bar{x}_i^k-x_i^k(t))dt+\sigma^k dw_i^k(t) \quad, \quad k = 1,...,d 
\label{eq:xi_eq_lemma}
\end{align}
where $\sigma^k$ is the $k$-th element in the diagonal of ${\Lambda}$. We note that the stochastic process describe in \eqref{eq:xi_eq_lemma} is an Ornstein–Uhlenbeck (OU) process \cite{maller2009ornstein} that satisfies in case of stationarity $\mathbb{E}[x_{i}^k(t)]=\bar{x}_i^k$. 
We note that as an alternative to stationarity, which is assumed throughout, we can assume that the process initial condition is a random variable with expected value $\mathbb{E}[x_{i}^k(0))]=\bar{x}_i^k$. In this case, the expected value of \eqref{eq:xi_eq_lemma} is given by:
\begin{align*}
\mathbb{E}[x_{i}^k(t)]&=\mathbb{E}[x_{i}^k(0)] + \mathbb{E}\int_{0}^{t} (\bar{x}_{i}^k-x_{i}^k(s)) ds \\
	&=\bar{x}_{i}^k+t\bar{x}_{i}^k -\int_{0}^{t} \mathbb{E} [x_{i}^k(s)] ds.
\end{align*}
The ODE above can be recast as:
\begin{align*}
\dot{m}(t) = \bar{x}_{i}^k+t\bar{x}_{i}^k -m(t)
\end{align*}
where $m(t)=\int_{0}^{t} \mathbb{E} [x_{i}^k(s)] ds$. The solution to this ODE is:
\begin{align*}
m(t)=t\bar{x}_i^k,
\end{align*}
and therefore, $\mathbb{E}[x_{i}^k(t)]=\dot{m}(t)=\bar{x}_i^k$.
In vector form, we have $\mathbb{E}[\boldsymbol{x}_{i}(t)]=\bar{\boldsymbol{x}}_i$.
\end{proof}

\begin{lemma}
The covariance matrix of the increments of the It\^{o} process $\boldsymbol{x}_i(t)$ is given by $\mathrm{Cov}[\boldsymbol{x}_i(\delta t+t)|\boldsymbol{x}_i(t)]={\delta t} {\Lambda} ^ 2$.
\label{lemma:2}
\end{lemma}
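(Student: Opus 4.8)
The plan is to exploit the fact, already established within the proof of Lemma~\ref{lemma:1}, that each coordinate $x_i^k(t)$ evolves as a scalar Ornstein--Uhlenbeck process \eqref{eq:xi_eq_lemma} with mean-reversion rate $1$ and diffusion coefficient $\sigma^k$, the $k$th diagonal entry of ${\Lambda}$. Conditioning on $\boldsymbol{x}_i(t)$ makes $\boldsymbol{x}_i(t)$ a constant, so the conditional covariance of the increment $\boldsymbol{x}_i(t+\delta t)-\boldsymbol{x}_i(t)$ coincides with the conditional covariance of $\boldsymbol{x}_i(t+\delta t)$ given $\boldsymbol{x}_i(t)$; hence it suffices to compute the latter.

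First I would write the increment in integral form,
\begin{equation*}
\boldsymbol{x}_i(t+\delta t)-\boldsymbol{x}_i(t)=\int_t^{t+\delta t}\bigl(\bar{\boldsymbol{x}}_i-\boldsymbol{x}_i(s)\bigr)\,ds+\int_t^{t+\delta t}{\Lambda}\,d\boldsymbol{w}_i(s),
\end{equation*}
and split the covariance contribution into the drift term and the martingale term. The martingale term $\int_t^{t+\delta t}{\Lambda}\,d\boldsymbol{w}_i(s)$ is Gaussian with zero mean and, by the It\^{o} isometry together with the independence of the components of $\boldsymbol{w}_i$, has covariance $\int_t^{t+\delta t}{\Lambda}{\Lambda}^\top\,ds=\delta t\,{\Lambda}^2$, where ${\Lambda}{\Lambda}^\top={\Lambda}^2$ because ${\Lambda}$ is diagonal.

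Next I would argue that the drift term contributes only higher-order terms in $\delta t$. A clean way is to use the exact transition law of the OU process: solving the linear ODE for the conditional variance of $x_i^k(t+\delta t)$ given $x_i^k(t)$ (or quoting the standard formula) gives $\tfrac{(\sigma^k)^2}{2}\bigl(1-e^{-2\delta t}\bigr)=(\sigma^k)^2\,\delta t+O(\delta t^2)$, while the cross-covariances between distinct coordinates vanish identically since the coordinate processes are driven by independent Brownian motions. Assembling the coordinates then yields $\mathrm{Cov}[\boldsymbol{x}_i(t+\delta t)\mid\boldsymbol{x}_i(t)]=\delta t\,{\Lambda}^2+O(\delta t^2)$, which is the claimed identity at leading order in the sampling interval $\delta t$ -- precisely the regime in which it feeds into Proposition~\ref{prop:1}.

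The hard part here is not the computation but the bookkeeping: one must be explicit that the stated equality is understood at leading order in $\delta t$ (the $O(\delta t^2)$ correction originating from the mean reversion in the drift), and one must justify the diagonal structure, which relies entirely on the assumed independence of $\boldsymbol{w}_{i,\theta}$ and $\boldsymbol{w}_{i,\eta}$ and of their respective components. An even shorter alternative that avoids the OU transition density altogether is to invoke the Euler--Maruyama representation consistent with the discrete sampling grid, $\boldsymbol{x}_i(t_j)-\boldsymbol{x}_i(t_{j-1})=-\nabla{U}(\boldsymbol{x}_i(t_{j-1}))\,\delta t+{\Lambda}\bigl(\boldsymbol{w}_i(t_j)-\boldsymbol{w}_i(t_{j-1})\bigr)$, and note that conditionally on $\boldsymbol{x}_i(t_{j-1})$ the first term is deterministic while $\boldsymbol{w}_i(t_j)-\boldsymbol{w}_i(t_{j-1})\sim N(0,\delta t\,{I}_{d\times d})$, so the conditional covariance is exactly ${\Lambda}(\delta t\,{I}_{d\times d}){\Lambda}^\top=\delta t\,{\Lambda}^2$.
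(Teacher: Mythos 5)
Your proposal is correct, and its core is the same as the paper's: the diagonal structure comes from the independence of the components of $\boldsymbol{w}_i$, and the size of each diagonal entry comes from the Brownian increment, giving $\delta t\,{\Lambda}^2$. The difference is one of rigor rather than route. The paper's proof is a one-line assertion: it writes the $(k_1,k_2)$ entry of the conditional covariance and equates it to $\delta t\,(\Lambda^{k_1k_2})^2$ directly, citing only the independence of the Brownian motions, with no mention of the drift. You make explicit what that step suppresses: decomposing the increment into the drift integral plus the martingale $\int_t^{t+\delta t}{\Lambda}\,d\boldsymbol{w}_i(s)$, applying the It\^{o} isometry to get $\delta t\,{\Lambda}^2$ for the martingale part, and then observing via the exact OU transition law that the mean reversion contributes only an $O(\delta t^2)$ correction (each conditional variance is $\tfrac{(\sigma^k)^2}{2}(1-e^{-2\delta t})$, not exactly $(\sigma^k)^2\delta t$). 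This is a genuinely useful refinement: as stated, the lemma's identity is exact only in the Euler--Maruyama discretization you describe, and otherwise holds at leading order in $\delta t$; since Proposition~\ref{prop:1} only uses ${C}_i^{-1}$ up to the $O(\epsilon)$ and small-$\delta t$ regime anyway, the extra $O(\delta t^2)$ term is harmless, but your bookkeeping makes the approximation status of the statement clear where the paper leaves it implicit.
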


\begin{proof}[Proof of Lemma \ref{lemma:2}]
By definition, the $(k_1,k_2)$-th element of the covariance matrix is
\begin{align*}
&\mathrm{Cov}[\boldsymbol{x}_i(\delta t+t)|\boldsymbol{x}_i(t)]_{k_1,k_2}\\&=\mathrm{Cov}[x_i^{k_1}(t+\delta t),x_i^{k_2}(t+\delta t)|x_i^{k_1}(t),x_i^{k_2}(t)]\\
&=\delta t (\Lambda ^{k_1 k_2})^2
\end{align*}
where $\Lambda ^{k_1 k_2}$ is the $(k_1,k_2)$-th element of the matrix ${\Lambda}$.
The last equality in the equation above holds since the Brownian motions of the different variables in \eqref{eq:ito_theta}, \eqref{eq:ito_eta}, and \eqref{eq:x_eq} are independent.

In matrix form, we have:
\begin{align*}
\mathrm{Cov}[\boldsymbol{x}_i(\delta t+t)|\boldsymbol{x}_i(t)]&={\delta t} {\Lambda} ^ 2
\end{align*}
\end{proof}

\begin{proof} [Proof of Proposition \ref{prop:1}]
According to Lemma \ref{lemma:1} and Lemma \ref{lemma:2}, the features in \eqref{eq:z_linear} and \eqref{eq:c_overoll} are: $(\boldsymbol{z}_i,{C}_i)=(\bar{\boldsymbol{x}}_i,{\delta t} {\Lambda}_i ^ 2)$.
By using the inverse of the covariance matrix:
\begin{align*}
	{C}_{i}^{-1}=\frac{1}{\delta t} \begin{bmatrix}{I} & 0\\
			0 & \epsilon^2 {I}
	\end{bmatrix}
\end{align*}
and by substituting $(\boldsymbol{z}_i,{C}_i)$ into \eqref{eq:mahalanobis} we obtain that:
\begin{align*}
	d(\boldsymbol{z}_{i},\boldsymbol{z}_{l})&=\frac{1}{2}(\boldsymbol{z}_{i}-\boldsymbol{z}_{l})^{\top}({C}_{i}^{-1}+{C}_{l}^{-1})(\boldsymbol{z}_{i}-\boldsymbol{z}_{l})\\
	&=\frac{1}{\delta t} \sum_{k=1}^{d}e^{k}((\bar{\boldsymbol{x}}_{i})^{k}-(\bar{\boldsymbol{x}}_{l})^{k})^{2}  
\end{align*}
where $e^{k}=1$ if $k\in\{1,..,d_{1}\}$ (denoting the indices of the state variables),
and $e^{k}=\epsilon$ if $k\in\{d_{1}+1,..,d_{1}+d_{2}\}$ (denoting the indices of the noise variables).
Further derivation yields:
\begin{align*}
    d(\boldsymbol{z}_{i},\boldsymbol{z}_{l})&=\frac{1}{\delta t} \sum_{k=1}^{d}e^{k}((\bar{\boldsymbol{x}}_{i})^{k}-(\bar{\boldsymbol{x}}_{l})^{k})^{2}\\
    &=\frac{1}{\delta t}\left[\sum_{k=1}^{d_{1}}((\boldsymbol{\theta}_{i})^{k}-(\boldsymbol{\theta}_{l})^{k})^{2}+\sum_{k=d_{1}+1}^{d}\epsilon((\boldsymbol{\eta}_{i})^{k}-(\boldsymbol{\eta}_{l})^{k})^{2}\right]\\
    &=\frac{1}{\delta t}\left[\|\boldsymbol{\theta}_{i}-\boldsymbol{\theta}_{l}\|^{2}+\epsilon\|\boldsymbol{\eta}_i-\boldsymbol{\eta}_l\|^{2}\right]\\
    &=\frac{1}{\delta t}\left[\|\boldsymbol{\theta}_{i}-\boldsymbol{\theta}_{l}\|^{2}+O(\epsilon)\right]
\end{align*}
\end{proof}

\section{Proof of Proposition \ref{prop:2}}\label{app2}
The proof of the proposition relies on the following results.
\begin{lemma}
The expected value of the sum of $\boldsymbol{y}_i(t)$ and its temporal increment is given by:
\begin{align*}
\mathbb{E}\left[\lim_{\delta t\to 0}\frac{(\boldsymbol{y}_i(t+\delta t)-\boldsymbol{y}_i(t))}{\delta t}+\boldsymbol{y}_i(t)\right] 
&=\bar{\boldsymbol{y}}_i  +  \boldsymbol{\phi}_i\\ &+O(\|\bar{\boldsymbol{x}}_i-\boldsymbol{x}_i(t)\|^2)
\end{align*}
where $\bar{\boldsymbol{y}}_i= \begin{bmatrix} \bar{y}_i^1 \\.. \\\bar{y}_i^d \end{bmatrix}$ and $\boldsymbol{\phi}_i =\mathbb{E} \begin{bmatrix} \frac{1}{2}\sum_{p=1}^{d} (\sigma^{k})^2 f_{pp}^1(\boldsymbol{x}_i(t)) \\.. \\\frac{1}{2}\sum_{k=1}^{d} (\sigma^{k})^2 f_{pp}^d(\boldsymbol{x}_i(t)) \end{bmatrix}$.
\label{lemma:3}
\end{lemma}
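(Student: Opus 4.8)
The plan is to apply It\^{o}'s formula to each coordinate $y_i^k(t)=f^k(\boldsymbol{x}_i(t))$ of the observation, take expectations, and then use Lemma \ref{lemma:1} to eliminate the first-order fluctuation terms; this is exactly the drift computation that underlies the feature $\boldsymbol{z}_i$ in \eqref{eq:z_i_nonlinear}. Recall from \eqref{eq:x_eq} (rewritten as in the proof of Lemma \ref{lemma:1}) that $d\boldsymbol{x}_i(t)=(\bar{\boldsymbol{x}}_i-\boldsymbol{x}_i(t))\,dt+\Lambda\,d\boldsymbol{w}_i(t)$ with $\Lambda$ diagonal, entries $\sigma^1,\dots,\sigma^d$, and independent driving Brownian motions, so that the quadratic covariation is $d\langle x_i^p,x_i^q\rangle(t)=\delta_{pq}(\sigma^p)^2\,dt$. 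It\^{o}'s formula then gives, for $1\le k\le s$,
\begin{align*}
dy_i^k(t)=\Bigl[\,\sum_{p=1}^d f_p^k(\boldsymbol{x}_i(t))(\bar{x}_i^p-x_i^p(t))+\tfrac12\sum_{p=1}^d (\sigma^p)^2 f_{pp}^k(\boldsymbol{x}_i(t))\Bigr]dt+\sum_{p=1}^d f_p^k(\boldsymbol{x}_i(t))\,\sigma^p\,dw_i^p(t).
\end{align*}

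Next I would read the quantity $\lim_{\delta t\to0}\delta t^{-1}(\boldsymbol{y}_i(t+\delta t)-\boldsymbol{y}_i(t))$ inside the expectation as the (infinitesimal) drift of $\boldsymbol{y}_i$: since the stochastic-integral term above is a martingale with zero conditional expectation, taking $\mathbb{E}$ leaves only the bracketed drift, so
\begin{align*}
\mathbb{E}\Bigl[\lim_{\delta t\to0}\frac{y_i^k(t+\delta t)-y_i^k(t)}{\delta t}+y_i^k(t)\Bigr]=\mathbb{E}\Bigl[\sum_{p=1}^d f_p^k(\boldsymbol{x}_i(t))(\bar{x}_i^p-x_i^p(t))\Bigr]+\phi_i^k+\mathbb{E}[f^k(\boldsymbol{x}_i(t))],
\end{align*}
with $\phi_i^k=\mathbb{E}[\tfrac12\sum_{p=1}^d(\sigma^p)^2 f_{pp}^k(\boldsymbol{x}_i(t))]$. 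Then I Taylor-expand around the baseline $\bar{\boldsymbol{x}}_i$. For the last term, $f^k(\boldsymbol{x}_i(t))=f^k(\bar{\boldsymbol{x}}_i)+\sum_p f_p^k(\bar{\boldsymbol{x}}_i)(x_i^p(t)-\bar{x}_i^p)+O(\|\boldsymbol{x}_i(t)-\bar{\boldsymbol{x}}_i\|^2)$; taking expectations and using $\mathbb{E}[x_i^p(t)]=\bar{x}_i^p$ from Lemma \ref{lemma:1}, the linear term drops and $\mathbb{E}[f^k(\boldsymbol{x}_i(t))]=\bar{y}_i^k+O(\|\bar{\boldsymbol{x}}_i-\boldsymbol{x}_i(t)\|^2)$ where $\bar{y}_i^k:=f^k(\bar{\boldsymbol{x}}_i)$. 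Similarly, expanding $f_p^k(\boldsymbol{x}_i(t))=f_p^k(\bar{\boldsymbol{x}}_i)+O(\|\boldsymbol{x}_i(t)-\bar{\boldsymbol{x}}_i\|)$ in the first term rewrites each summand as $f_p^k(\bar{\boldsymbol{x}}_i)(\bar{x}_i^p-x_i^p(t))+O(\|\boldsymbol{x}_i(t)-\bar{\boldsymbol{x}}_i\|^2)$, whose expectation is $O(\|\bar{\boldsymbol{x}}_i-\boldsymbol{x}_i(t)\|^2)$ again by Lemma \ref{lemma:1}. Collecting the pieces and stacking the coordinates yields the claimed identity $\mathbb{E}[\cdots]=\bar{\boldsymbol{y}}_i+\boldsymbol{\phi}_i+O(\|\bar{\boldsymbol{x}}_i-\boldsymbol{x}_i(t)\|^2)$.

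The calculation itself is routine once It\^{o}'s formula is invoked; the delicate points are (i) justifying the informal $\lim_{\delta t\to0}$ expression as the generator acting on $f$, which is what makes the martingale term contribute nothing after taking expectation, and (ii) being precise about the $O(\|\bar{\boldsymbol{x}}_i-\boldsymbol{x}_i(t)\|^2)$ bookkeeping — it should be understood as the small-fluctuation order, i.e.\ a bound involving $\mathbb{E}\|\boldsymbol{x}_i(t)-\bar{\boldsymbol{x}}_i\|^2$, which is finite and small under the Ornstein--Uhlenbeck stationarity already used in Lemma \ref{lemma:1}. One also needs the Taylor remainders for $f^k$ and $f_p^k$ to be uniformly controlled, which follows from the smoothness assumed on $f$ (and is the place where a global bound on the second derivatives, consistent with Assumption \ref{assumption:1}, enters).
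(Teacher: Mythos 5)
Your proof is correct, and it follows the same backbone as the paper's (It\^{o}'s formula applied to $y_i^k=f^k(\boldsymbol{x}_i(t))$, identification of the $\lim_{\delta t\to 0}$ quantity with the drift so the martingale part drops, then a second-order Taylor bookkeeping), but it eliminates the first-order term by a different mechanism. The paper Taylor-expands $\bar{y}_i^k=f^k(\bar{\boldsymbol{x}}_i)$ around the \emph{current} point $\boldsymbol{x}_i(t)$, so that the drift term $(\bar{\boldsymbol{x}}_i-\boldsymbol{x}_i(t))^\top\nabla f^k(\boldsymbol{x}_i(t))$ cancels identically (conditionally on $\boldsymbol{x}_i(t)$) against $\bar{y}_i^k-y_i^k(t)$, and only afterwards takes total expectation; Lemma \ref{lemma:1} is never needed there. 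You instead expand $f^k$ and $f^k_p$ around the \emph{baseline} $\bar{\boldsymbol{x}}_i$ and kill the linear fluctuation terms in expectation via $\mathbb{E}[\boldsymbol{x}_i(t)]=\bar{\boldsymbol{x}}_i$ from Lemma \ref{lemma:1}. Both give the same $\bar{\boldsymbol{y}}_i+\boldsymbol{\phi}_i+O(\|\bar{\boldsymbol{x}}_i-\boldsymbol{x}_i(t)\|^2)$ (understood, as you rightly note, in the sense of a moment bound $O(\mathbb{E}\|\boldsymbol{x}_i(t)-\bar{\boldsymbol{x}}_i\|^2)$ — the paper is looser on this point); the paper's route buys a pointwise cancellation that works without invoking stationarity of the fluctuations, while yours makes the reliance on the zero-mean OU structure explicit and is slightly more transparent about where the error terms come from. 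A small bonus: your second-order It\^{o} term carries $(\sigma^p)^2$ inside the sum over $p$, which is the consistent indexing; the paper's $(\sigma^k)^2$ in the statement and proof of Lemma \ref{lemma:3} is an index slip, since $\sigma$ is indexed by the coordinate of $\boldsymbol{x}$, not of $\boldsymbol{y}$.
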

\begin{proof} [Proof of Lemma \ref{lemma:3}]
We first note that, the expected value in the equation above is with respect to the joint distribution of the process {$\boldsymbol{x}_i(t)$} and its increments {$\boldsymbol{x}_i(t+\delta t) - \boldsymbol{x}_i(t)$}.
According to It\^{o}'s Lemma, the process $\boldsymbol{y}_i(t)$ is given by:
\begin{align*}
d\boldsymbol{y}_i^k&=[(\bar{\boldsymbol{x}}_i-\boldsymbol{x}_i(t))^\top \nabla f^k(\boldsymbol{x}_i(t))+\frac{1}{2}\sum_{p=1}^{d} (\sigma^{k})^2 f_{pp}^k(\boldsymbol{x}_i(t))]dt \nonumber \\
&+ (\nabla f^k(\boldsymbol{x}_i(t)))^\top {\Lambda} d{W}_i(t) 
\end{align*}
We note that the It\^{o} process $\boldsymbol{x}_i(t)$ is defined as a diffusion process. Therefore, by \cite{lawler2010stochastic} (Chapter 3) and the derivation above, the expected value of the temporal increment of the process $\boldsymbol{y}_i^k(t)=f(\boldsymbol{x}_i(t))$  is given by:
\begin{align}\label{eq:y_i_t_process}
    &\mathbb{E}\left[ \left. \lim_{\delta t\to 0} \frac{\boldsymbol{y}_i^k(\delta t + t)-\boldsymbol{y}_i^k(t)}{\delta t} \right\vert \boldsymbol{x}_i^k(t)\right] \\&= (\bar{\boldsymbol{x}}_i-\boldsymbol{x}_i(t))^\top \nabla f^k(\boldsymbol{x}_i(t)) +\frac{1}{2}\sum_{p=1}^{d} (\sigma^{k})^2 f_{pp}^k(\boldsymbol{x}_i(t)) \nonumber
\end{align}
We use the first order Taylor expansion of $\bar{\boldsymbol{y}}_i^k=f^k(\bar{\boldsymbol{x}}_i)$ around $\boldsymbol{x}_i(t)$:
\begin{align*}
&\bar{\boldsymbol{y}}_i^k= 
\boldsymbol{y}_i(t)+[(\bar{\boldsymbol{x}}_i-\boldsymbol{x}_i(t)]^\top \nabla f^k(\boldsymbol{x}_i(t))+O(\|\bar{\boldsymbol{x}}_i-\boldsymbol{x}_i(t)\|^2)
\end{align*}
and substitute it into \eqref{eq:y_i_t_process} and get:
\begin{align*}
    &\mathbb{E}\left[\left. \lim_{\delta t\to 0} \frac{\boldsymbol{y}_i^k(\delta t + t)-\boldsymbol{y}_i^k(t)}{\delta t} \right\vert \boldsymbol{x}_i^k(t)\right] \\&=\bar{\boldsymbol{y}}_i^k - \boldsymbol{y}_i^k(t) +\frac{1}{2}\sum_{p=1}^{d} (\sigma^{k})^2 f_{pp}^k(\boldsymbol{x}_i(t)) \nonumber
\end{align*}
After rearranging the the equation we get:
\begin{align*}
    &\mathbb{E}\left[\left.\lim_{\delta t\to 0} \frac{\boldsymbol{y}_i^k(\delta t + t)-\boldsymbol{y}_i^k(t)}{\delta t} +\boldsymbol{y}_i^k(t)  \right\vert \boldsymbol{x}_i^k(t)\right] \\&=\bar{\boldsymbol{y}}_i^k +\frac{1}{2}\sum_{p=1}^{d} (\sigma^{k})^2 f_{pp}^k(\boldsymbol{x}_i(t)) \nonumber
\end{align*}
We now use the law of total expectation, and take the expected value with respect to the process $\boldsymbol{x}_i^k(t)$ of both sides of the equation:
\begin{align*}
    &\mathbb{E}\left[\lim_{\delta t\to 0} \frac{\boldsymbol{y}_i^k(\delta t + t)-\boldsymbol{y}_i^k(t)}{\delta t} +\boldsymbol{y}_i^k(t)\right]\\
    &=\mathbb{E}\left[\left. \mathbb{E}\left[\lim_{\delta t\to 0} \frac{\boldsymbol{y}_i^k(\delta t + t)-\boldsymbol{y}_i^k(t)}{\delta t} +\boldsymbol{y}_i^k(t)  \right\vert \boldsymbol{x}_i^k(t)\right]\right] \\
    &=\mathbb{E}\left[\bar{\boldsymbol{y}}_i^k +\frac{1}{2}\sum_{p=1}^{d} (\sigma^{k})^2 f_{pp}^k(\boldsymbol{x}_i(t))\right]\\
    &=\bar{\boldsymbol{y}}_i^k +\mathbb{E}\left[\frac{1}{2}\sum_{p=1}^{d} (\sigma^{k})^2 f_{pp}^k(\boldsymbol{x}_i(t))\right] \nonumber
\end{align*}
The proof is concluded by rewriting the above in a matrix form:
\begin{align*}
\mathbb{E}\left[\lim_{\delta t\to 0}\frac{\boldsymbol{y}_i(\delta t+ t)-\boldsymbol{y}_i(t)}{\delta t} + \boldsymbol{y}_i(t)\right]&= \bar{\boldsymbol{y}}_i+\boldsymbol{\phi}_i\\\\
 &+ O(\|\bar{\boldsymbol{x}}_i-\boldsymbol{x}_i(t)\|^2)
\end{align*}
whereas we noted before the last expected value is with respect to  the joint distribution of the process and its increments.
\end{proof}

\begin{lemma}
The covariance ${C}_i=\mathrm{Cov}[\boldsymbol{y}_i(t+\delta t)| \boldsymbol{y}_i(t)]$ is given by:
\begin{align*}
{C}_i=\mathrm{Cov}[\boldsymbol{y}_i(t+\delta t)| \boldsymbol{y}_i(t)]={J}{\Lambda} ^2 {J}^\top + O(\delta t)   
\end{align*}
where ${J}$ is the $s \times d $ Jacobian matrix of the function $f$ and $\bar{\boldsymbol{y}}_i=f(\bar{\boldsymbol{x}}_i)$.
\label{lemma:4}
\end{lemma}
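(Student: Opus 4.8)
The plan is to reduce the conditional covariance of $\boldsymbol{y}_i(t+\delta t)$ to that of $\boldsymbol{x}_i(t+\delta t)$, which is already computed in Lemma~\ref{lemma:2}, by linearizing the observation function $f$. First I would observe that, because $f$ is bi-Lipschitz and hence a bijection, the event $\{\boldsymbol{y}_i(t)=\boldsymbol{y}\}$ coincides with $\{\boldsymbol{x}_i(t)=f^{-1}(\boldsymbol{y})\}$; combined with the Markov property of the It\^{o} process \eqref{eq:x_eq}, this gives $\sigma(\boldsymbol{y}_i(t))=\sigma(\boldsymbol{x}_i(t))$, so $\mathrm{Cov}[\boldsymbol{y}_i(t+\delta t)\mid\boldsymbol{y}_i(t)]=\mathrm{Cov}[\boldsymbol{y}_i(t+\delta t)\mid\boldsymbol{x}_i(t)]$ and it suffices to work conditionally on $\boldsymbol{x}_i(t)$. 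Conditioned on $\boldsymbol{x}_i(t)$, the increment $\Delta\boldsymbol{x}:=\boldsymbol{x}_i(t+\delta t)-\boldsymbol{x}_i(t)$ is (exactly, for this Ornstein--Uhlenbeck process) Gaussian with mean $\delta t(\bar{\boldsymbol{x}}_i-\boldsymbol{x}_i(t))+O(\delta t^{2})$ and covariance $\delta t\,\Lambda^{2}+O(\delta t^{2})$ by Lemma~\ref{lemma:2}; in particular $\mathbb{E}[\|\Delta\boldsymbol{x}\|^{2}\mid\boldsymbol{x}_i(t)]=O(\delta t)$ and $\mathbb{E}[\|\Delta\boldsymbol{x}\|^{4}\mid\boldsymbol{x}_i(t)]=O(\delta t^{2})$.

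Next I would Taylor-expand $f$ about $\boldsymbol{x}_i(t)$, writing $\boldsymbol{y}_i(t+\delta t)=f(\boldsymbol{x}_i(t))+J\,\Delta\boldsymbol{x}+\tfrac12 Q(\Delta\boldsymbol{x})+\cdots$, where $J=J(\boldsymbol{x}_i(t))$ is the $s\times d$ Jacobian and $Q(\Delta\boldsymbol{x})$ collects the quadratic terms with coefficients the Hessians $f^{k}_{p_1p_2}$. Since $f(\boldsymbol{x}_i(t))$ is deterministic given $\boldsymbol{x}_i(t)$, the conditional covariance is that of $J\Delta\boldsymbol{x}+\tfrac12 Q(\Delta\boldsymbol{x})+\cdots$, whose leading contribution is $J\,\mathrm{Cov}[\Delta\boldsymbol{x}\mid\boldsymbol{x}_i(t)]\,J^{\top}=\delta t\,J\Lambda^{2}J^{\top}+O(\delta t^{2})$, giving the claimed $J\Lambda^{2}J^{\top}$ up to the overall $\delta t$ factor carried through the analysis exactly as in Lemma~\ref{lemma:2}. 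Equivalently, one may read this off directly from It\^{o}'s lemma: as already derived in the proof of Lemma~\ref{lemma:3}, the process $\boldsymbol{y}_i$ obeys an SDE whose diffusion coefficient is $J\Lambda$, so its infinitesimal covariance is $J\Lambda(J\Lambda)^{\top}=J\Lambda^{2}J^{\top}$ since $\Lambda$ is diagonal (see \cite{lawler2010stochastic}).

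The remaining work -- and the main technical point -- is the error bookkeeping. I would bound the covariance of the quadratic term by $\mathrm{Cov}[Q(\Delta\boldsymbol{x})\mid\boldsymbol{x}_i(t)]=O(\mathbb{E}[\|\Delta\boldsymbol{x}\|^{4}\mid\boldsymbol{x}_i(t)])=O(\delta t^{2})$, the cross-covariance between $J\Delta\boldsymbol{x}$ and $Q(\Delta\boldsymbol{x})$ by Cauchy--Schwarz by $O(\delta t^{1/2})\cdot O(\delta t)=O(\delta t^{3/2})$, and the higher-order Taylor remainders analogously; smoothness of $f$ on the (almost surely bounded, on any finite horizon) excursions of the process makes the implied constants finite. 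Collecting these estimates with the leading term yields $C_i=J\Lambda^{2}J^{\top}+O(\delta t)$, which is the assertion. A minor point handled along the way is the $t$-dependence of $J(\boldsymbol{x}_i(t))$: since the statement uses a single matrix $C_i$, one either anchors $J=J(\bar{\boldsymbol{x}}_i)$ and absorbs the resulting $O(\|\bar{\boldsymbol{x}}_i-\boldsymbol{x}_i(t)\|)$ discrepancy into the error, or interprets $C_i$ as the expected conditional covariance; both readings are compatible with how Lemma~\ref{lemma:4} is used afterwards.
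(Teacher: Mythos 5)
Your argument is sound, but it is worth knowing that the paper does not actually prove Lemma \ref{lemma:4} internally: its ``proof'' is a two-sentence pointer to \cite{Coifman_Singer:2008} and \cite{dsilva2016data}, where the increment covariance of $f(\boldsymbol{x}_i(t))$ is shown to be approximated by $J\Lambda^2 J^\top$ via exactly the kind of Taylor/It\^{o} linearization you wrote out. So your proposal supplies, self-contained, the derivation the paper outsources, and the two routes coincide in substance: reduce $\sigma(\boldsymbol{y}_i(t))$ to $\sigma(\boldsymbol{x}_i(t))$ via bi-Lipschitz invertibility, expand $f$ to second order, use the conditional Gaussian (OU) increment statistics of Lemma \ref{lemma:2}, and control the quadratic remainder. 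What your version buys is explicit error bookkeeping that the paper never states; what the citation route buys is brevity and the precise form of the higher-order correction terms as worked out in \cite{Coifman_Singer:2008}. Two small points to tighten. First, the $\delta t$ normalization: as stated, the lemma's right-hand side $J\Lambda^2 J^\top + O(\delta t)$ is dimensionally inconsistent with Lemma \ref{lemma:2} (which carries the factor $\delta t$), so one must read $C_i$ as the increment covariance normalized by $\delta t$ (or absorb $\delta t$ into $\Lambda^2$, as the paper implicitly does when it writes $(\boldsymbol{z}_i,C_i)=(\bar{\boldsymbol{x}}_i,\delta t\,\Lambda_i^2)$ in Proposition \ref{prop:1} but drops it in Proposition \ref{prop:2}); you flag this, which is the right instinct. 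Second, your Cauchy--Schwarz bound on the cross term gives $O(\delta t^{3/2})$ for the raw increment covariance, i.e.\ only $O(\sqrt{\delta t})$ after the $1/\delta t$ normalization, which is weaker than the claimed $O(\delta t)$; since the conditional increment is Gaussian, its odd central moments vanish, so the cross-covariance between $J\Delta\boldsymbol{x}$ and $Q(\Delta\boldsymbol{x})$ is actually $O(\delta t^{2})$, which restores the stated rate. With that refinement (and noting that the implied constants depend on $1/\epsilon$ through $\Lambda$), your proof is a complete and correct substitute for the paper's citation.
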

\begin{proof}[Proof of Lemma \ref{lemma:4}]
The full proof of Lemma \ref{lemma:4} appears in \cite{Coifman_Singer:2008} and \cite{dsilva2016data}. In \cite{dsilva2016data}, the authors show that the covariance at a specific point in time $t$ of the process $\boldsymbol{y}_i(t)$ can be approximated by short trajectories of the process initialized at that point. The second equality is shown in \cite{Coifman_Singer:2008}.
\end{proof}
\begin{proof} [Proof of Proposition \ref{prop:2}]
 According to Lemma \ref{lemma:3} and Lemma \ref{lemma:4}, the modified features \eqref{eq:z_i_nonlinear} are: $(\boldsymbol{z}_i,{C}_i)=(\bar{\boldsymbol{y}}_i  +  \boldsymbol{\phi}_i,{J}{\Lambda} ^2 {J}^\top)$.
The proposed distance with the modified features is given by: 
\begin{align}\label{eq:prop2_proof}
d(\boldsymbol{z}_i,\boldsymbol{z}_l)&=\frac{1}{2}(\boldsymbol{z}_i-\boldsymbol{z}_l)^\top({C}_{i}^{-1}+{C}_{l}^{-1} )(\boldsymbol{z}_i-\boldsymbol{z}_l)\\
&=\frac{1}{2}(\bar{\boldsymbol{y}}_i-\bar{\boldsymbol{y}}_l)^\top(({J}{\Lambda}^2{J}^\top )^{-1}+({J}{\Lambda}^2{J}^\top )^{-1})(\bar{\boldsymbol{y}}_i-\bar{\boldsymbol{y}}_l) \nonumber\\\
&+\frac{1}{2}(\boldsymbol{\phi}_i-\boldsymbol{\phi}_l)^\top(({J}{\Lambda}^2{J}^\top )^{-1}+({J}{\Lambda}^2{J}^\top )^{-1} )(\boldsymbol{\phi}_i-\boldsymbol{\phi}_l) \nonumber
\end{align}
According to \cite{Coifman_Singer:2008}, the first term on the right hand size of \eqref{eq:prop2_proof} is equal to:
\begin{align*}
&\frac{1}{2}(\bar{\boldsymbol{y}}_i-\bar{\boldsymbol{y}}_l)^{\top}(({J}{\Lambda}^2{J}^{\top} )^{-1}+({J}{\Lambda}^2{J}^{\top} )^{-1})(\bar{\boldsymbol{y}}_i-\bar{\boldsymbol{y}}_l)=\\&\|\bar{\boldsymbol{x}}_i-\bar{\boldsymbol{x}}_l\|_M +O(\|\bar{\boldsymbol{y}}_i-\bar{\boldsymbol{y}}_l\|^4) 
\end{align*}
where $\bar{\boldsymbol{x}}_i=\begin{bmatrix} \bar{\boldsymbol{\theta}}_{i} \\ \bar{\boldsymbol{\eta}}_i \end{bmatrix}$, and for convenience, we denote the norm associated with the modified Mahalanobis distance in direct access case by $\|\bar{\boldsymbol{x}}_i-\bar{\boldsymbol{x}}_l\|_M = \frac{1}{2}(\bar{\boldsymbol{x}}_{i}-\bar{\boldsymbol{x}}_{l})^{\top} {\Lambda}^{-1}(\bar{\boldsymbol{x}}_{i}-\bar{\boldsymbol{x}}_{l})$.
By Assumption \ref{assumption:1}, the second term on the right hand side of \eqref{eq:prop2_proof} is of order $O(\epsilon)$, where $\epsilon \ll 1$.

Combining the above, using Proposition \ref{prop:1}, we have:
\begin{align*}
d(\boldsymbol{z}_i,\boldsymbol{z}_l)&=\|\bar{\boldsymbol{x}}_i-\bar{\boldsymbol{x}}_l\|_M +O(\|\bar{\boldsymbol{y}}_i-\bar{\boldsymbol{y}}_l\|^4) + O(\epsilon)\\
&=\|\bar{\boldsymbol{\theta}}_i-\bar{\boldsymbol{\theta}}_l\|^2  +O(\|\bar{\boldsymbol{y}}_i-\bar{\boldsymbol{y}}_l\|^4) + O(\epsilon)
\end{align*}
\end{proof}

\section{Linear Case Analysis}
We consider a special case where the measurement function is a linear transformation $\boldsymbol{y}=f(\boldsymbol{x})=Ax$ where $A \in \mathbb{R}^{d \times s}$. In \cite{dsilva2016data}, a similar scenario was studied, and here, we primarily review the main steps of their analysis with minor adjustments to our problem setup.
In case $f$ is some linear function, the covariance of the increments of the process  $\boldsymbol{y}_i(t)$ is $C_i=\mathrm{Cov} [\boldsymbol{y}_i(t+\delta t)|\boldsymbol{y}_i(t)] =A \Lambda A^T $, where $\Lambda=\mathrm{Cov}[\boldsymbol{x}_i(t+\delta t)|\boldsymbol{x}_i(t)] $ is the covariance of the increment of $\boldsymbol{x}_i(t)$. 
Let $A=U \Sigma V^T$  be the singular value decomposition (SVD) of A.
The pseudo-inverse of the covariance matrix $C_i$ is:
\begin{align}\label{eq:lin_C}
C_i^{\dagger}=U \Sigma^{-1} V^T \Lambda^{-1} V \Sigma^{-1} U^T    
\end{align}
In addition, in this case, the proposed feature  $\boldsymbol{z}_i=\mathbb{E}[\boldsymbol{y}_i(t)]$ satisfies:
\begin{align}\label{eq:lin_z}
    \boldsymbol{z}_i=\mathbb{E}[\boldsymbol{y}_i(t)] = \bar{\boldsymbol{y}}_i=A\bar{\boldsymbol{x}_i}
\end{align}
By \eqref{eq:lin_C} and \eqref{eq:lin_z}, the proposed distance can be recast as:
\begin{align*}
&d(\boldsymbol{z}_{i},\boldsymbol{z}_{l})=\frac{1}{2}(\bar{\boldsymbol{y}}_i-\bar{\boldsymbol{y}}_l)^T (C_i^{\dagger} +C_l^{\dagger})(\bar{\boldsymbol{y}}_i-\bar{\boldsymbol{y}}_l)\\
&=(\bar{\boldsymbol{x}}_i-\bar{\boldsymbol{x}}_l)^T A^T C_i^{\dagger} A (\bar{\boldsymbol{x}}_i-\bar{\boldsymbol{x}}_l)\\
&=(\bar{\boldsymbol{x}}_i-\bar{\boldsymbol{x}}_l)^T \Lambda^{-1} (\bar{\boldsymbol{x}}_i-\bar{\boldsymbol{x}}_l)\\
&=\|\bar{\boldsymbol{x}}_i-\bar{\boldsymbol{x}}_l\|_M\\
&=\frac{1}{\delta t}\left[\|\boldsymbol{\theta}_{i}-\boldsymbol{\theta}_{l}\|^{2}+O(\epsilon)\right]
\end{align*} 
where the last equality is due to Proposition \ref{prop:1}.

\bibliographystyle{unsrt}  
\bibliography{templatePRIME}

\begin{thebibliography}{10}

\bibitem{limousin1998electrical}
Patricia Limousin, Paul Krack, Pierre Pollak, AbdelHamid Benazzouz, Claire
  Ardouin, Dominique Hoffmann, and Alim-Louis Benabid.
\newblock Electrical stimulation of the subthalamic nucleus in advanced
  parkinson's disease.
\newblock {\em New England Journal of Medicine}, 339(16):1105--1111, 1998.

\bibitem{benabid1994acute}
AL~Benabid, P~Pollak, Ch~Gross, D~Hoffmann, A~Benazzouz, DM~Gao, A~Laurent,
  M~Gentil, and J~Perret.
\newblock Acute and long-term effects of subthalamic nucleus stimulation in
  parkinson's disease.
\newblock {\em Stereotactic and functional neurosurgery}, 62(1-4):76--84, 1994.

\bibitem{hariz2002complications}
Marwan~I Hariz.
\newblock Complications of deep brain stimulation surgery.
\newblock {\em Movement disorders: official journal of the Movement Disorder
  Society}, 17(S3):S162--S166, 2002.

\bibitem{nickl2019rescuing}
Robert~C Nickl, Martin~M Reich, Nicol{\'o}~Gabriele Pozzi, Patrick Fricke,
  Florian Lange, Jonas Roothans, Jens Volkmann, and Cordula Matthies.
\newblock Rescuing suboptimal outcomes of subthalamic deep brain stimulation in
  parkinson disease by surgical lead revision.
\newblock {\em Neurosurgery}, 85(2):E314--E321, 2019.

\bibitem{moro2002impact}
E~Moro, RJA Esselink, J~Xie, M~Hommel, AL~Benabid, and P~Pollak.
\newblock The impact on parkinson’s disease of electrical parameter settings
  in stn stimulation.
\newblock {\em Neurology}, 59(5):706--713, 2002.

\bibitem{witt2012factors}
Karsten Witt, Christine Daniels, and Jens Volkmann.
\newblock Factors associated with neuropsychiatric side effects after stn-dbs
  in parkinson's disease.
\newblock {\em Parkinsonism \& related disorders}, 18:S168--S170, 2012.

\bibitem{moran2006real}
Anan Moran, Izhar Bar-Gad, Hagai Bergman, and Zvi Israel.
\newblock Real-time refinement of subthalamic nucleus targeting using bayesian
  decision-making on the root mean square measure.
\newblock {\em Movement disorders: official journal of the Movement Disorder
  Society}, 21(9):1425--1431, 2006.

\bibitem{weinberger2006beta}
Moran Weinberger, Neil Mahant, William~D Hutchison, Andres~M Lozano, Elena
  Moro, Mojgan Hodaie, Anthony~E Lang, and Jonathan~O Dostrovsky.
\newblock Beta oscillatory activity in the subthalamic nucleus and its relation
  to dopaminergic response in parkinson's disease.
\newblock {\em Journal of neurophysiology}, 96(6):3248--3256, 2006.

\bibitem{zaidel2010subthalamic}
Adam Zaidel, Alexander Spivak, Benjamin Grieb, Hagai Bergman, and Zvi Israel.
\newblock Subthalamic span of $\beta$ oscillations predicts deep brain
  stimulation efficacy for patients with parkinson’s disease.
\newblock {\em Brain}, 133(7):2007--2021, 2010.

\bibitem{shamir2012microelectrode}
Reuben~R Shamir, Adam Zaidel, Leo Joskowicz, Hagai Bergman, and Zvi Israel.
\newblock Microelectrode recording duration and spatial density constraints for
  automatic targeting of the subthalamic nucleus.
\newblock {\em Stereotactic and functional neurosurgery}, 90(5):325--334, 2012.

\bibitem{novak2007detection}
Peter Novak, Slawomir Daniluk, Samuel~A Ellias, and Jules~M Nazzaro.
\newblock Detection of the subthalamic nucleus in microelectrographic
  recordings in parkinson disease using the high-frequency (> 500 hz) neuronal
  background.
\newblock {\em Journal of neurosurgery}, 106(1):175--179, 2007.

\bibitem{telkes2016prediction}
Ilknur Telkes, Joohi Jimenez-Shahed, Ashwin Viswanathan, Aviva Abosch, and
  Nuri~F Ince.
\newblock Prediction of stn-dbs electrode implantation track in parkinson's
  disease by using local field potentials.
\newblock {\em Frontiers in Neuroscience}, 10:198, 2016.

\bibitem{valsky2017s}
Dan Valsky, Odeya Marmor-Levin, Marc Deffains, Renana Eitan, Kim~T Blackwell,
  Hagai Bergman, and Zvi Israel.
\newblock S top! border ahead: A utomatic detection of subthalamic exit during
  deep brain stimulation surgery.
\newblock {\em Movement Disorders}, 32(1):70--79, 2017.

\bibitem{wong2009functional}
Stephen Wong, GH~Baltuch, JL~Jaggi, and SF~Danish.
\newblock Functional localization and visualization of the subthalamic nucleus
  from microelectrode recordings acquired during dbs surgery with unsupervised
  machine learning.
\newblock {\em Journal of neural engineering}, 6(2):026006, 2009.

\bibitem{zaidel2009delimiting}
Adam Zaidel, Alexander Spivak, Lavi Shpigelman, Hagai Bergman, and Zvi Israel.
\newblock Delimiting subterritories of the human subthalamic nucleus by means
  of microelectrode recordings and a hidden markov model.
\newblock {\em Movement disorders}, 24(12):1785--1793, 2009.

\bibitem{coifman2006diffusion}
Ronald~R Coifman and St{\'e}phane Lafon.
\newblock Diffusion maps.
\newblock {\em Applied and computational harmonic analysis}, 21(1):5--30, 2006.

\bibitem{mattingly2010convergence}
Jonathan~C Mattingly, Andrew~M Stuart, and Michael~V Tretyakov.
\newblock Convergence of numerical time-averaging and stationary measures via
  poisson equations.
\newblock {\em SIAM Journal on Numerical Analysis}, 48(2):552--577, 2010.

\bibitem{meyn2012markov}
Sean~P Meyn and Richard~L Tweedie.
\newblock {\em Markov chains and stochastic stability}.
\newblock Springer Science \& Business Media, 2012.

\bibitem{dsilva2016data}
Carmeline~J Dsilva, Ronen Talmon, C~William Gear, Ronald~R Coifman, and
  Ioannis~G Kevrekidis.
\newblock Data-driven reduction for a class of multiscale fast-slow stochastic
  dynamical systems.
\newblock {\em SIAM Journal on Applied Dynamical Systems}, 15(3):1327--1351,
  2016.

\bibitem{Coifman_Singer:2008}
Amit Singer and Ronald~R. Coifman.
\newblock Non-linear independent component analysis with diffusion maps.
\newblock 25(2):226 -- 239, 2008.

\bibitem{kushnir2012anisotropic}
Dan Kushnir, Ali Haddad, and Ronald~R Coifman.
\newblock Anisotropic diffusion on sub-manifolds with application to earth
  structure classification.
\newblock {\em Applied and Computational Harmonic Analysis}, 32(2):280--294,
  2012.

\bibitem{Tenenbaum2000}
Joshua~B. Tenenbaum, Vin de~Silva, and John~C. Langford.
\newblock A global geometric framework for nonlinear dimensionality reduction.
\newblock {\em Science}, 260:2319--2323, 2000.

\bibitem{Roweis2000}
Sam~T. Roweis and Lawrence~K. Saul.
\newblock Nonlinear dimensionality reduction by locally linear embedding.
\newblock {\em Science}, 260:2323--2326, 2000.

\bibitem{Donoho2003}
David~L. Donoho and Carrie Grimes.
\newblock Hessian eigenmaps: New locally linear embedding techniques for
  high-dimensional data.
\newblock {\em Proc. Nat. Acad. Sci.}, 100:5591--5596, 2003.

\bibitem{Belkin_Niyogi_2003}
Mikhail Belkin and Parth Niyogi.
\newblock {Laplacian Eigenmaps for Dimensionality Reduction and Data
  Representation}.
\newblock {\em Neural. Comput.}, 15(6):1373--1396, June 2003.

\bibitem{Coifman_Lafon2006}
Ronald.~R. Coifman and St{\'e}phan Lafon.
\newblock Diffusion maps.
\newblock 21(1):5--30, 2006.

\bibitem{talmon2015manifold}
Ronen Talmon, St{\'e}phane Mallat, Hitten Zaveri, and Ronald~R Coifman.
\newblock Manifold learning for latent variable inference in dynamical systems.
\newblock {\em IEEE Transactions on Signal Processing}, 63(15):3843--3856,
  2015.

\bibitem{yair2017reconstruction}
Or~Yair, Ronen Talmon, Ronald~R Coifman, and Ioannis~G Kevrekidis.
\newblock Reconstruction of normal forms by learning informed observation
  geometries from data.
\newblock {\em Proceedings of the National Academy of Sciences},
  114(38):E7865--E7874, 2017.

\bibitem{zhu2018image}
Bo~Zhu, Jeremiah~Z Liu, Stephen~F Cauley, Bruce~R Rosen, and Matthew~S Rosen.
\newblock Image reconstruction by domain-transform manifold learning.
\newblock {\em Nature}, 555(7697):487--492, 2018.

\bibitem{singer2009diffusion}
Amit Singer, Yoel Shkolnisky, and Boaz Nadler.
\newblock Diffusion interpretation of nonlocal neighborhood filters for signal
  denoising.
\newblock {\em SIAM Journal on Imaging Sciences}, 2(1):118--139, 2009.

\bibitem{ibanez2018manifold}
Rub{\'e}n Ibanez, Emmanuelle Abisset-Chavanne, Jose~Vicente Aguado, David
  Gonzalez, Elias Cueto, and Francisco Chinesta.
\newblock A manifold learning approach to data-driven computational elasticity
  and inelasticity.
\newblock {\em Archives of Computational Methods in Engineering}, 25(1):47--57,
  2018.

\bibitem{shnitzer2019recovering}
Tal Shnitzer, Mirela Ben-Chen, Leonidas Guibas, Ronen Talmon, and Hau-Tieng Wu.
\newblock Recovering hidden components in multimodal data with composite
  diffusion operators.
\newblock {\em SIAM Journal on Mathematics of Data Science}, 1(3):588--616,
  2019.

\bibitem{wu2014assess}
Hau-tieng Wu, Ronen Talmon, and Yu-Lun Lo.
\newblock Assess sleep stage by modern signal processing techniques.
\newblock {\em IEEE Transactions on Biomedical Engineering}, 62(4):1159--1168,
  2014.

\bibitem{talmon2013empirical}
Ronen Talmon and Ronald~R Coifman.
\newblock Empirical intrinsic geometry for nonlinear modeling and time series
  filtering.
\newblock {\em Proceedings of the National Academy of Sciences},
  110(31):12535--12540, 2013.

\bibitem{mallat2012group}
St{\'e}phane Mallat.
\newblock Group invariant scattering.
\newblock {\em Communications on Pure and Applied Mathematics},
  65(10):1331--1398, 2012.

\bibitem{bruna2015intermittent}
Joan Bruna, St{\'e}phane Mallat, Emmanuel Bacry, Jean-Fran{\c{c}}ois Muzy,
  et~al.
\newblock Intermittent process analysis with scattering moments.
\newblock {\em The Annals of Statistics}, 43(1):323--351, 2015.

\bibitem{hartigan1979algorithm}
John~A Hartigan and Manchek~A Wong.
\newblock Algorithm as 136: A k-means clustering algorithm.
\newblock {\em Journal of the Royal Statistical Society. Series C (Applied
  Statistics)}, 28(1):100--108, 1979.

\bibitem{valsky2020real}
Dan Valsky, Kim~T Blackwell, Idit Tamir, Renana Eitan, Hagai Bergman, and Zvi
  Israel.
\newblock Real-time machine learning classification of pallidal borders during
  deep brain stimulation surgery.
\newblock {\em Journal of Neural Engineering}, 17(1):016021, 2020.

\bibitem{gavish2019optimal}
Matan Gavish, Ronen Talmon, Pei-Chun Su, and Hau-Tieng Wu.
\newblock Optimal recovery of mahalanobis distance in high dimension.
\newblock {\em arXiv preprint arXiv:1904.09204}, 2019.

\bibitem{donoho2018optimal}
David~L Donoho, Matan Gavish, and Iain~M Johnstone.
\newblock Optimal shrinkage of eigenvalues in the spiked covariance model.
\newblock {\em Annals of statistics}, 46(4):1742, 2018.

\bibitem{lederman2018learning}
Roy~R Lederman and Ronen Talmon.
\newblock Learning the geometry of common latent variables using
  alternating-diffusion.
\newblock {\em Applied and Computational Harmonic Analysis}, 44(3):509--536,
  2018.

\bibitem{salhov2019multi}
Moshe Salhov, Ofir Lindenbaum, Yariv Aizenbud, Avi Silberschatz, Yoel
  Shkolnisky, and Amir Averbuch.
\newblock Multi-view kernel consensus for data analysis.
\newblock {\em Applied and Computational Harmonic Analysis}, 49(1):208--228,
  2020.

\bibitem{maller2009ornstein}
Ross~A Maller, Gernot M{\"u}ller, and Alex Szimayer.
\newblock Ornstein--uhlenbeck processes and extensions.
\newblock {\em Handbook of financial time series}, pages 421--437, 2009.

\bibitem{lawler2010stochastic}
Gregory~F Lawler.
\newblock Stochastic calculus: An introduction with applications.
\newblock {\em American Mathematical Society}, 2010.

\end{thebibliography}

\end{document}